\newtheorem{theorem}{Theorem}[section]
\newtheorem{proposition}[theorem]{Proposition}
\newtheorem{lemma}[theorem]{Lemma}
\newtheorem{corollary}[theorem]{Corollary}
\theoremstyle{definition}
\newtheorem{definition}[theorem]{Definition}
\newtheorem{remark}[theorem]{Remark}
\newtheorem{example}[theorem]{Example}
\begin{document}
\date{2024-7-10}
\title[Extending quantum detailed balance]{Extending quantum detailed balance through optimal transport}
\author{Rocco Duvenhage, Samuel Skosana and Machiel Snyman}
\address{Rocco Duvenhage\\
	Department of Physics\\
	University of Pretoria\\
	Pretoria\\
	South Africa}
\email{rocco.duvenhage@up.ac.za}
\address{Samuel Skosana\\
	Department of Physics\\
	University of Pretoria\\
	Pretoria\\
	South Africa\\
	and DSI-NRF Centre of Excellence
	in Mathematical and Statistical Sciences (CoE-MaSS), South Africa}
\address{Machiel Snyman\\
	Department of Physics\\
	University of Pretoria\\
	Pretoria\\
	South Africa\\
	and	DSI-NRF Centre of Excellence
	in Mathematical and Statistical Sciences (CoE-MaSS), South Africa\\
	(Current address: Fakulteit Natuurwetenskappe, Akademia, Pretoria, South Africa)}

\begin{abstract}
We develop a general approach to setting up and studying classes of quantum
dynamical systems close to and structurally similar to systems having
specified properties, in particular detailed balance. This is done in terms of
transport plans and Wasserstein distances between systems on possibly
different observable algebras.

\emph{Keywords}: quantum optimal transport; transport plans; Wasserstein
distance; detailed balance; nonequilibrium; dynamical systems

\end{abstract}

\maketitle
\tableofcontents

\section{Introduction}

The goal of this paper is to develop a natural framework in which one can
study extensions (including weaker forms) of quantum detailed balance
conditions, with the aim of ultimately providing classes of systems with
non-equilibrium steady states, having structural similarities to systems
satisfying conventional detailed balance conditions. Put differently, we
establish a new approach to deviation from detailed balance, even in the
classical case. Our approach is via optimal transport.

One of the paper's main objectives is to develop a point of view that shows
why the bimodule approach to noncommutative optimal transport, as developed in
\cite{D1, D2}, fits perfectly with dynamical systems possessing steady states,
and thus provides a framework for their analysis. At the same time this theory
of optimal transport is developed further, as motivated by our goal above.

As has often been pointed out (see for example \cite{AFQ, AFQ2, AI, BOR, FU}),
to study non-equilibrium successfully, it is of great value to restrict to
classes of non-equilibrium states with sufficient structure to be of interest,
but still amenable to analysis. Our approach promises to provide classes of
systems structurally close to systems satisfying detailed balance, which
should ensure that they have enough structure to build a theory around them,
and still be relevant to physical situations.

The essence of our framework is to extend detailed balance conditions by
considering balance between two systems, in the sense of \cite{DS2}, making
use of couplings of the states of the two systems. However, couplings can also
be interpreted in terms of transport. Optimal transport, through Wasserstein
distances between systems, can then give a quantitative version of the theory
in \cite{DS2}, quantifying how far the two systems are from one another. In
particular, if one of the two systems is a simple or well understood system
satisfying some detailed balance condition, this setup can be used to define
extensions of this condition in the other system. Essentially, some properties
of one system are carried over to the other in a very structured and
quantifiable way. This can be done even if the two systems have different
observable algebras, although one would prefer some natural physical relation
between certain observables of the two systems. Such a relation is in fact
built into the Wasserstein distance, with the mentioned observables acting as
coordinates in the respective systems.

We use a transport plan approach to quadratic Wasserstein distances between
systems (extending Wasserstein distances between states) developed in
\cite{D1, D2}, though we need to extend it to systems on different observable
algebras. It should be noted that while these distances may be actual metrics,
they are more generally only pseudometrics and can even lack symmetry. The
term ``distance'' is therefore used in this generalized sense. The triangle
inequality will always hold, though. This is essential for obtaining certain
bounds on deviation from detailed balance in Subsection \ref{OndAfdAfwVanFB}.

There are other approaches to quantum Wasserstein distances, at least between
states (rather than systems) on the same algebra, some key papers being
\cite{BV, CM1, CGT, CEFZ, DMTL, H1, W}, along with many others. Many of these
use a dynamical (see \cite{BB}) rather than transport plan approach.

However, the approach in \cite{D1, D2} appears most appropriate to fit with
\cite{DS2}, due to its bimodule setup and level of generality. Indeed, the
approach in \cite{D1, D2} was developed in large part with this goal in mind.
In particular, it will be seen that the asymmetry inherent to the bimodule
approach to transport plans, is natural from the perspective of refining or
extending concepts regarding systems.

Optimal transport and Wasserstein distance between classical dynamical systems
do not appear to have been studied anywhere near as extensively as between
measures, but there have been some papers on the topic, the first of these
appearing to be \cite{O}. Even aside from the quantum setup of this paper,
though, the goals and approach in those papers are quite different from ours.

In the literature one can find other approaches to extend quantum detailed
balance to non-equilibrium situations, or to study the deviation from detailed
balance, including via entropy production. See for example \cite{AFQ, AFQ2,
AI, BQ, BQ2, BOR, FR10, FR, FU, TDiehl14}. Our work is an alternative, and we
believe complementary, approach. This paper will not attempt to connect our
approach to those mentioned above or to entropy production. We leave this for
future work. However, we note that by following very different routes than
ours, the papers \cite{Au, DecSak, VS} did connect entropy production to
optimal transport and Wasserstein distance.

Although our main motivation is to extend quantum detailed balance properties,
it appears that our approach is new in the classical case as well, and may
have application there. Moreover, our framework is not restricted only to
extending detailed balance. Although the latter appears to be a very natural
application of the balance and optimal transport approach, it should apply to
other properties of systems with invariant states as well. For example ergodic properties.

Our approach will be formulated in terms of von Neumann algebras, with unital
completely positive (u.c.p.) maps serving as the dynamics of open systems,
although for reasons to be explained, merely unital positive (u.p.) maps will
also be allowed. This places the results of Tomita-Takesaki theory at our
disposal, along with the related theory of bimodules (or correspondences in
the sense of Connes), which are indispensable for the technical underpinnings
of the general theory. In addition, this von Neumann algebra setting is
conceptually very clear, for example in terms of comparison to the classical
case. This combination of conceptual clarity along with powerful technical
tools, is the reason we use the theory of von Neumann algebras as the basic setting.

We nevertheless strive to make the essential ideas accessible to those not
well versed in von Neumann algebras, by also expressing or illustrating many
of the key points in special cases, using examples in terms of $n\times n$
matrix algebras or classical Markov chains on finite sets.

\subsection{A reader's guide}

The reader can get an elementary overview of our ideas, goals and approach,
along with some motivation and intuition, from Section \ref{AfdKlasVb}, while
Examples \ref{VbEindDimStandVorm} and \ref{EindDimModGroep} together with most
of Subsection \ref{OndAfdW} form an entry point to the more general framework,
but for the simple case of $n\times n$ matrix algebras. The bulk of the paper
can then be read as if written for the latter special case, though much of the
technical proofs in Sections \ref{AfdW} and \ref{AfdWEiensk} will not be clear
from this point of view.

Section \ref{AfdBasKon} reviews the basic elements needed in our mathematical
framework in terms of von Neumann algebras, but we also express the key points
in finite dimensions, requiring only some linear algebra. In Section
\ref{AfdKon} we give a conceptual exposition of the approach introduced in
this paper, including an outline of Wasserstein distance. Subsequently
Sections \ref{AfdKlasVb} and \ref{AfdKwantVb} treat simple examples to clarify
the approach and to illustrate how extended or refined detailed balance
conditions are obtained. Section \ref{AfdKlasVb} has been written in such a
way that it should for the most part be accessible without reading the
previous sections. Prior to Section \ref{AfdKwantVb}, a specific form of
quantum detailed balance, namely standard quantum detailed balance with
respect to a reversing operation, is reviewed in Section \ref{AfdFB&KMS},
along with brief general remarks on quantum detailed balance. In order to do
this, the notion of KMS duals needs to be discussed as well. The latter also
plays a role in the general theory of Wasserstein distances. While Section
\ref{AfdFB&KMS} logically belongs before Section \ref{AfdKwantVb}, the latter
is accessible without having read Section \ref{AfdFB&KMS}.

The sections mentioned in the previous paragraph essentially set out, motivate
and illustrate our approach to analyzing systems via (optimal) transport to
and from other (typically simpler) systems. What is still needed is the
theoretical underpinning for a basic tool of the theory, namely Wasserstein
distance, which the next two sections are devoted to. In Section \ref{AfdW} we
give the general definition and derive the basic metric properties of
Wasserstein distances between systems on possibly different von Neumann algebras.

Further properties of Wasserstein distances are then investigated in Section
\ref{AfdWEiensk}, along with their consequences. Symmetries of Wasserstein
distances obtained in Subsection \ref{OndAfdAfwVanFB}, along with the triangle
inequality, show how the Wasserstein distance of a system from a possibly
simpler (say classical) system satisfying detailed balance, can bound how far
the system is from satisfying detailed balance. This is one of the central
conclusions of the paper, and is summarized in Corollary \ref{fbAfskat}. Next,
keep in mind that a Wasserstein distance can be an asymmetric pseudometric. In
particular, zero distance between two systems does not mean that they are the
same. Subsection \ref{OndAfdW&GemStruk} investigates how zero Wasserstein
distance between two systems relates to common structure in the two systems.

The paper concludes with a short synopsis together with a brief discussion of
further work and possibilities in Section \ref{AfdEinde}.

\section{Basic concepts\label{AfdBasKon}}

Here we provide and review relevant concepts and technical definitions used in
subsequent sections. Some of these definitions, in particular that of systems,
only appear in a special form in this section, with more general versions
being treated in Section \ref{AfdW}. This is done for clarity and to emphasize
the core ideas. For example, here we focus on the case of discrete time.

\subsection{Preliminary definitions, conventions and
notations\label{OndAfdDef}}

Throughout the paper, $A$, $B$ and $C$ denote $\sigma$-finite von Neumann
algebras in standard form, except when stated otherwise. See \cite{Ar74,
Con74, Ha75} for the original papers on standard forms, and \cite[Section
2.5.4]{BR1} for a very good treatment. We denote identity operators on Hilbert
spaces by $1$. In particular, the unit $1_{A}$ of a von Neumann algebra $A$,
represented on a Hilbert space $H_{A}$, will usually be denoted by $1$. Denote
the set of faithful normal states on $A$ by $\mathfrak{F}(A)$. The normal
states are exactly those given by density matrices (see \cite[Theorem
2.4.21]{BR1}).

By the universality of a certain natural positive cone $\mathcal{P}_{A}\subset
H_{A}$ associated to $A$, the modular conjugation associated to all $\mu
\in\mathfrak{F}(A)$ is the same and will be denoted $J_{A}$. It is an
anti-unitary operator in $H_{A}$ with $J_{A}^{2}=1$, and we can define the
linear map%
\[
j_{A}:=J_{A}(\cdot)^{\ast}J_{A}:\mathcal{B}(H_{A})\rightarrow\mathcal{B}%
(H_{A})
\]
on the bounded linear operators $\mathcal{B}(H_{A})$, i.e., $j_{A}%
(a):=J_{A}a^{\ast}J_{A}$ for every bounded linear $a:H_{A}\rightarrow H_{A}$.
Then $j_{A}$ is easily seen to be an anti-$\ast$-automorphism. I.e., $j_{A}$
is linear, bijective and $j_{A}(a^{\ast})=j_{A}(a)^{\ast}$, but $j_{A}(ab)=$
$j_{A}(b)j_{A}(a)$ for all $a,b\in\mathcal{B}(H_{A})$. From this we can
define
\[
\mu^{\prime}=\mu\circ j_{A}\in\mathfrak{F}(A^{\prime})
\]
on the commutant $A^{\prime}$ of $A$ for every $\mu\in\mathfrak{F}(A)$, i.e.,
$\mu^{\prime}(a^{\prime})=\mu(j_{A}(a^{\prime}))$ for all $a^{\prime}\in
A^{\prime}$. Since every $\mu\in\mathfrak{F}(A)$ is given by a (unique) cyclic
and separating vector $\Lambda_{\mu}\in\mathcal{P}_{A}$ for $A$, through
\[
\mu(a)=\left\langle \Lambda_{\mu},a\Lambda_{\mu}\right\rangle
\]
for all $a\in A$ in terms of $H_{A}$'s inner product, we can represent
$\mu^{\prime}$ by
\[
\mu^{\prime}(a^{\prime})=\left\langle \Lambda_{\mu},a^{\prime}\Lambda_{\mu
}\right\rangle
\]
for all $a^{\prime}\in A^{\prime}$.

\begin{example}
\label{VbEindDimStandVorm}
This example briefly recounts a standard form for the von Neumann algebra
$M_{n}$ of $n\times n$ complex matrices. First note that any faithful state
(necessarily normal in finite dimensions) on $M_{n}$ is given by
$\mu(a)=\operatorname*{Tr}(\rho_{\mu}a)$ for all $a\in M_{n}$, where
$\rho_{\mu}\in M_{n}$ is some density matrix whose eigenvalues are strictly
positive. However, in the general theory we don't use this representation
directly. We rather use a standard form, one formulation of which is as
follows. Write%
\[
H_{A}=\mathbb{C}^{n}\otimes_{s}\mathbb{C}^{n}=M_{n},
\]
where elementary tensors in $H_{A}$ are written in the form $x\otimes
_{s}y:=xy^{\intercal}$ in terms of usual matrix multiplication, for column
vectors $x,y\in\mathbb{C}^{n}$ and with the row vector $y^{\intercal}$ being
the transpose of $y$. The inner product of $H_{A}$ is%
\[
\left\langle X,Y\right\rangle :=\operatorname*{Tr}(X^{\ast}Y)
\]
where $X^{\ast}=X^{\dagger}$ is the usual adjoint of the matrix $X\in M_{n}$.
We represent $M_{n}$ on $H_{A}$ as
\[
A=M_{n}\otimes_{s}1_{n}%
\]
with $1_{n}$ the $n\times n$ identity matrix. Here we use the notation
$\otimes_{s}$ to represent the tensor product $M_{n}\otimes M_{n}$ on
$H_{A}=M_{n}$ via%
\[
(a\otimes_{s}b)X:=aXb^{\intercal}%
\]
for all $a,b\in M_{n}$. An alternative but equivalent representation is the
``usual'' tensor product $H_{A}=\mathbb{C}^{n}\otimes\mathbb{C}^{n}$ taken as
the Kronecker product, but $\otimes_{s}$ emphasizes the bimodule structure
inherent to our setting and it is more convenient for certain purposes, as
will be seen in Example \ref{VbWvirMn}. (Also see \cite[Section 2]{DS1}.)

In this representation we use the notation $\pi(a):=a\otimes_{s}1_{n}$ and
$\pi^{\prime}(b)=1_{n}\otimes_{s}b$ for $a,b\in M_{n}$, in terms of which we
have%
\[
\mu(\pi(a))=\operatorname*{Tr}(\rho_{\mu}a)=\left\langle \Lambda_{\mu}%
,\pi(a)\Lambda_{\mu}\right\rangle
\]
for all $a\in M_{n}$, where
\[
\Lambda_{\mu}:=\rho_{\mu}^{1/2}\in H_{A}.
\]
Furthermore, $A$'s commutant is
\[
A^{\prime}=1_{n}\otimes_{s}M_{n},
\]
while $J_{A}$ is given by%
\[
J_{A}\pi(a)\Lambda_{\mu}=\pi^{\prime}((a^{\ast})^{\intercal})\Lambda_{\mu}%
\]
(which is indeed independent of the faithful state $\mu$) and one has
\[
j_{A}(\pi(a))=\pi^{\prime}(a^{\intercal})\text{ \ \ and \ \ }j_{A}(\pi
^{\prime}(a))=\pi(a^{\intercal})
\]
for $a\in M_{n}$. Note that this means that
\[
\mu^{\prime}(\pi^{\prime}(a))=\mu(\pi(a^{\intercal}))=\operatorname*{Tr}%
(\rho_{\mu}a^{\intercal})=\operatorname*{Tr}(\rho_{\mu}^{\intercal}a)
\]
for every $a\in M_{n}$. It is easily shown that $\pi(a^{\ast})=\pi(a)^{\ast}$
and $\pi^{\prime}(a^{\ast})=\pi^{\prime}(a)^{\ast}$, i.e., $\left\langle
X,\pi(a)Y\right\rangle =\left\langle \pi(a^{\ast})X,Y\right\rangle $, and
similarly for $\pi^{\prime}$. The natural positive cone is defined (see
\cite[Definition 2.5.25]{BR1}) to be%
\[
\mathcal{P}_{A}=\{\pi(a)j_{A}(\pi(a)^{\ast})1_{n}:a\in M_{n}\}=\{aa^{\ast
}:a\in M_{n}\},
\]
the trace one elements of which give exactly all the density matrices in
$M_{n}$, with $1_{n}$ essentially serving as a ``reference'' element of
$H_{A}$.

Keeping these points in mind, one can fairly easily specialize the general von
Neumann algebraic setting above to $M_{n}$.

To conclude this example, note that when using the usual tensor product
$\otimes$ instead of $\otimes_{s}$, all the above still goes through, as they
are simply different representations of the tensor product, but the resulting
representation of $\Lambda_{\mu}$ becomes less convenient for our purposes.
\end{example}

\begin{remark}
Strictly speaking, everything we have done so far continues to work even if
the state $\mu$ is not faithful (as $\mathcal{P}_{A}$ includes all normal
states). However, certain duals (see Definition \ref{DefDuaal}) are going to
play a central role in our work. Their definition requires cyclic vectors for
$A$ in $\mathcal{P}_{A}$, and these are necessarily also separating for $A$,
making the corresponding states on $A$ faithful. Hence our focus on faithful states.
\end{remark}

Dynamical maps in this paper will be unital positive linear maps (u.p. maps,
for short) between von Neumann algebras. In quantum physics one is generally
interested in unital completely positive linear maps (u.c.p. maps), but our
mathematical framework (and indeed that of \cite{D2}) works for u.p. maps. In
addition, for certain kinds of detailed balance, we want to be able to view
reversing operations, which are not completely positive in the noncommutative
case, as a form of dynamics as well (see Sections \ref{AfdFB&KMS} and
\ref{AfdW}). We therefore need to cover maps which are merely positive, in our
framework. A key notion for us will be that of a dual of a dynamical map. A
form general enough for this paper, is as follows.

\begin{definition}
\label{DefDuaal}Given a u.p. map $E:A\rightarrow B$ satisfying $\nu\circ
E=\mu$ for some $\mu\in\mathfrak{F}(A)$ and $\nu\in\mathfrak{F}(B)$, its
\emph{dual} (w.r.t. $\mu$ and $\nu$)
\[
E^{\prime}:B^{\prime}\rightarrow A^{\prime}%
\]
is defined by
\[
\left\langle \Lambda_{\mu},aE^{\prime}(b^{\prime})\Lambda_{\mu}\right\rangle
=\left\langle \Lambda_{\nu},E(a)b^{\prime}\Lambda_{\nu}\right\rangle
\]
for all $a\in A$ and $b^{\prime}=B^{\prime}$.
\end{definition}

Note that according to \cite[Proposition 3.1]{AC}, $E^{\prime}$ exists, and it
is a u.p. map satisfying $\mu^{\prime}\circ E^{\prime}=\nu^{\prime}$. Under
the given assumptions, the maps $E$ and $E^{\prime}$ are necessarily normal,
i.e., $\sigma$-weakly continuous. If $E$ is u.c.p., then so is $E^{\prime}$.
The reader can refer to \cite{AC} for the theory behind such duals, also
summarized in \cite[Section 2]{DS2}. We do not include $\mu$ and $\nu$ in the
notation for the dual, as it will always be clear form context, and would make
the notation unnecessarily cumbersome.

We also point out that unitality and invariance can be viewed as dual
properties, in the following sense: Assume that two linear maps
$E:A\rightarrow B$ and $F:B^{\prime}\rightarrow A^{\prime}$ satisfy
\[
\left\langle \Lambda_{\mu},aF(b^{\prime})\Lambda_{\mu}\right\rangle
=\left\langle \Lambda_{\nu},E(a)b^{\prime}\Lambda_{\nu}\right\rangle
\]
for all $a\in A$ and $b^{\prime}\in B^{\prime}$. If $\nu\circ E=\mu$, then
setting $b^{\prime}=1$, we see that $F(1)=1$, since $\Lambda_{\mu}$ is cyclic
and separating. The converse is trivial. Similarly for $\mu\circ F=\nu$ versus
$E(1)=1$, by the fact that $\Lambda_{\nu}$ is cyclic and separating for $B$ in
terms of its Hilbert space.

The main mathematical object of concern in this paper is a system. A
\emph{system} $\mathbf{A}$ is defined as
\[
\mathbf{A}=(A,\alpha,\mu)
\]
with $A$ a von Neumann algebra (as above), $\alpha:A\rightarrow A$ a u.p. map,
and $\mu\in\mathfrak{F}(A)$ such that $\mu\circ\alpha=\mu$.

The \emph{dual} of a system $\mathbf{A}$ is defined as the system
\[
\mathbf{A}^{\prime}=(A^{\prime},\alpha^{\prime},\mu^{\prime}).
\]
These notations will be used as a conventions in the rest of the paper, along
with
\[
\mathbf{B}=(B,\beta,\nu)
\]
for a system on $B$. In Section \ref{AfdW}, however, the definition of a
system will be generalized.

\subsection{Transport plans between systems\label{OndAfdOP}}

The main method of studying systems in this paper is through transport between
systems. Given two systems $\mathbf{A}$ and $\mathbf{B}$, a \emph{transport
plan }$\omega$\emph{ from }$\mathbf{A}$\emph{ to }$\mathbf{B}$ is a state
$\omega$ on the algebraic tensor product $A\odot B^{\prime}$ such that%
\begin{equation}
\omega(a\otimes1)=\mu(a)\text{ \ and \ }\omega(1\otimes b^{\prime}%
)=\nu^{\prime}(b^{\prime}) \label{kop}%
\end{equation}
and
\begin{equation}
\omega(\alpha(a)\otimes b^{\prime})=\omega(a\otimes\beta^{\prime}(b^{\prime}))
\label{oordPl}%
\end{equation}
for all $a\in A$ and $b^{\prime}\in B^{\prime}$. We express this by the
notation%
\[
\mathbf{A}\omega\mathbf{B}%
\]
and denote the set of all such transport plans $\omega$ by
\[
T(\mathbf{A},\mathbf{B}).
\]

The notion $\mathbf{A}\omega\mathbf{B}$ was introduced in \cite{DS2} under the
name ``balance'' as an extension of certain detailed balance conditions. The
latter would be formulated in terms of $B=A$, with $\beta$ being some
variation on the dual $\alpha^{\prime}$ of $\alpha$, while $\omega$ is a
specific state; this will be reviewed in Section \ref{AfdFB&KMS}. At the same
time it extends the notion of a joining in ergodic theory, which allows one to
study certain qualitative and structural aspects of dynamical systems,
essentially by comparing different systems (see \cite{G} for the classical
theory, and \cite{D(B1), D(B2), D(B3), BCM, BCM2, BCM3} for a noncommutative version).
Both these points are relevant in this paper as well. However, we take the
(optimal) transport point of view in this paper, hence we refer to
$\mathbf{A}\omega\mathbf{B}$ as transport from $\mathbf{A}$ to $\mathbf{B}$,
rather than as balance. A special case of this was used in \cite{D1},
specifically for modular dynamics (from Tomita-Takesaki theory), to obtain
symmetry of Wasserstein metrics on states.

If the state $\omega$ is only required to satisfy the coupling property
(\ref{kop}), then it is called a \emph{transport plan} $\omega$ \emph{from} 
$\mu$ \emph{to} $\nu$. The set of all of these is denoted by
\[
T(\mu,\nu).
\]
This is in fact a fairly direct extension of classical transport plans, a nice
discussion of which, including why it is thought of as transport, can be found
in the introduction to \cite{V1}. There a transport plan $\pi$ from one
probability measure $\mu$ (on a space $X$) to another $\nu$ (on a space $Y$)
is simply a coupling of the two measures, i.e., $\pi(U\times Y)=\mu(U)$ and
$\pi(X\times V)=\nu(V)$, with the following interpretation: $\pi(U\times V)$
is the ``amount'' of probability transported from $U\subset X$ to $V\subset Y$.

The reason for involving the dual of a system, in particular the commutant of
the von Neumann algebra, is to alleviate obstacles due to noncommutativity.
This is what ultimately allows one to obtain the usual metric properties for
Wasserstein distances, specifically the triangle inequality and zero distance
between a system and itself. It is at the core of the bimodule approach to
Wasserstein distances, and serves as the key to the mathematical foundation of
this paper.

It is well known in quantum physics, in particular in the context of quantum
information, that a transport plan $\omega$ corresponds to a u.c.p. map i.e.,
a channel, $E_{\omega}:A\rightarrow B$, although the terminology ``transport
plan'' is not standard in this context, nor the interpretation as transport
(as an analogy to the classical case). See in particular \cite{Choi} for the
finite dimensional case, as is typically used in quantum information. A
general form of this correspondence was discussed and used in \cite{BCM, DS2},
making use of \cite{AC}. In the context of transport plans, this of course
means that a transport plan can be represented as a u.c.p. map, as has been
pointed out by \cite{dePT}. In this way a transport plan can indeed be viewed
as a dynamical map from one system to another, in the conventional quantum
mechanical sense. The relevant technical details will be presented in Section
\ref{AfdW}. In the classical setting the correspondence between couplings of
probability measures and Markov operators was studied in a special case (for
couplings of a measure with itself) in \cite{B}, and more generally in
\cite{MT, BM}, though it does not appear to be widely used in classical
optimal transport.

\section{Conceptual exposition\label{AfdKon}}

In this section we describe the basic conceptual approach of this paper,
including a preliminary discussion of Wasserstein distance without proofs.
This also provides the basics on Wasserstein distance needed in the examples
studied in the subsequent sections. The general theory of Wasserstein distance
will be taken up in Section \ref{AfdW}. The key ideas of this section are
illustrated in a simple classical case in Section \ref{AfdKlasVb}, which the
reader can read before this section with minimal loss in continuity.

\subsection{Extending dynamical properties by optimal
transport\label{OndAfdUitbrDmvOV}}

Given a property which a system may have or may not have, we aim to define an
extended or generalized version of this property for a system $\mathbf{A}$, by
requiring $\mathbf{A}$ to satisfy%
\[
\mathbf{A}\omega\mathbf{B}%
\]
for some system $\mathbf{B}$ which does satisfy the original property. In
particular, our goal is to define extensions of detailed balance conditions in
this way. Keep in mind that when $\omega$ is a product state, $\mathbf{A}%
\omega\mathbf{B}$ is trivial. To ensure that the condition $\mathbf{A}%
\omega\mathbf{B}$ is nontrivial and indeed as strong as possible, we are going
to frame this approach as an optimal transport problem. Wasserstein distances
will deliver optimal transport plans, which in our approach are viewed as the
strongest cases of the condition $\mathbf{A}\omega\mathbf{B}$. This
optimization and Wasserstein distances will be discussed in more detail in the
next subsection.

Of course, one may as well instead consider%
\[
\mathbf{B}\psi\mathbf{A,}%
\]
which gives an alternative way to extend properties from $\mathbf{B}$ to
$\mathbf{A}$, by viewing it as an optimal transport problem from $\mathbf{B}$
to $\mathbf{A}$. The two points of view are complementary, but not equivalent.
It will indeed be seen later on that the Wasserstein distances involved are
not necessarily symmetric.

We are going to motivate this transport plan interpretation by viewing it as a
refinement of the transfer of probabilities in $\mathbf{A}$ itself during each
step in its discrete time evolution.

First notice that the two basic properties satisfied by the completely
positive map $\alpha$, namely invariance $\mu\circ\alpha$ and unitality, which
are dual to one another, can respectively be expressed as $\mathbf{A}%
\omega\mathbf{B}$ and $\mathbf{B}\psi\mathbf{A}$, if $\mathbf{B}$ is taken as
a 1-point system. By the latter terminology we mean that $B=\mathbb{C}$. In
terms of the algebraic tensor product $A\odot B^{\prime}$, one then
necessarily has%
\[
\omega=\mu\odot\nu^{\prime}\text{ \ and \ }\psi=\nu\odot\mu^{\prime},
\]
which means that $\mathbf{A}\omega\mathbf{B}$ merely says
\[
\mu\circ\alpha(a)b'
= \omega \left(  \alpha(a) \otimes b' \right)
= \omega \left(  a \otimes \beta'(b') \right)  
= \mu(a)b'
\]
for all $a\in A$ and $b^{\prime}\in B^{\prime}=\mathbb{C}$, i.e., $\mu
\circ\alpha=\mu$, while $\mathbf{B}\psi\mathbf{A}$ similarly expresses
$\mu^{\prime}\circ\alpha^{\prime}=\mu^{\prime}$, which is dual (and
equivalent) to $\alpha(1)=1$.

In terms of transport between $\mathbf{A}$ and $\mathbf{B}$, this means that
transport from $\mathbf{A}$ to a single point expresses state invariance,
while transport from a single point to $\mathbf{A}$ expresses unitality of
$\alpha$. In the classical case where $\mathbf{A}$ consists of a finite number
of points, one can in fact identify the 1-point system as an arbitrary point
in $\mathbf{A}$, as will be seen in the next section.

By replacing the 1-point system by a general $\mathbf{B}$, we in effect lift
transfer of probabilities to or from a single point, to a more general form of
transport to or from $\mathbf{B}$. The condition $\mathbf{A}\omega\mathbf{B}$
then refines state invariance, while $\mathbf{B}\psi\mathbf{A}$ refines
unitality. This refinement of these two dual properties each places further
restrictions on $\mathbf{A}$, determined by the properties of $\mathbf{B}$.

By enforcing optimal transport through a Wasserstein distance, these
refinements are made as strong as possible, quantifying how strongly the
properties of $\mathbf{B}$ are reflected in $\mathbf{A}$. This can then be
used to explore generalized versions of properties of $\mathbf{B}$ in an
arbitrary system $\mathbf{A}$.

When $\mathbf{A}$ and $\mathbf{B}$ are systems on the same observable algebra,
and the Wasserstein distance is a (typically asymmetric) metric, then the
extreme case of zero distance means that $\mathbf{A}=\mathbf{B}$, i.e.,
$\mathbf{A}$ has all of $\mathbf{B}$'s properties. Other cases are typically
weaker versions of this extreme case, which can for example provide weakened
forms of detailed balance in $\mathbf{A}$.


Placing restrictions on a system $\mathbf{A}$ through (optimal) transport to
or from a ``model system'' $\mathbf{B}$ is natural, as such conditions generalize
probability transfer between individual points in a classical system. When
$\mathbf{B}$ is assumed to satisfy some form of detailed balance, this
provides a way to formulate extensions of this detailed balance condition in
$\mathbf{A}$. As this is done in a very structured way, it has the potential
to provide classes of systems having sufficient structure to include natural,
yet mathematically accessible, non-equilibrium systems.

Note that as $\mathbf{A}\omega\mathbf{B}$ and $\mathbf{B}\psi\mathbf{A}$
relate to two different properties of $\mathbf{A}$, namely $\mu\alpha=\mu$ and
$\alpha(1)=1$ respectively, we would not expect perfect symmetry between the
two situations, even classically. This is indeed what is found. In particular,
our Wasserstein distances typically turn out to be asymmetric, since the two
directions (the distance from $\mathbf{A}$ to $\mathbf{B}$ versus from
$\mathbf{B}$ to $\mathbf{A}$) relate to different (though dual) aspects of dynamics.

For simplicity, in the examples of this paper the case where $\mathbf{B}$ is a
classical system on a finite set of points will be emphasized. On the one hand
this simplifies calculations involved, but it also simplifies and clarifies
how to identify detailed balance type properties consequently appearing in
$\mathbf{A}$.

\subsection{Wasserstein distances\label{OndAfdW}}

We intend to define a Wasserstein distance from $\mathbf{A}$ to $\mathbf{B}$
in analogy to quadratic Wasserstein distance between probability measures on
$\mathbb{R}^{d}$, while also taking the dynamics into account. Due to the
dynamics and noncommutativity, such distances are typically not symmetric,
i.e., the distance from $\mathbf{B}$ to $\mathbf{A}$ may differ than that from
$\mathbf{A}$ to $\mathbf{B}$. In Section \ref{AfdW} it will be seen that one
can define refined versions of these Wasserstein distances, including
symmetric versions. In this section, however, only the most basic form will be
outlined in a special case. The general development of the theory is postponed
until Section \ref{AfdW}.

We assume that $\mathbf{A}$ and $\mathbf{B}$ are equipped with ``$d$%
-dimensional coordinate systems'' expressed as
\[
k=(k_{1},...,k_{d})\text{ \ \ and \ \ }l=(l_{1},...,l_{d})
\]
respectively, where $k_{i}\in A$ and $l_{i}\in B$ for $i=1,...,d$. Here we do
not place any restrictions on the choice of $k$ and $l$, for example,
$l_{1}=l_{2}$ is allowed.

When we confine ourselves to the case of finite dimensional or abelian von
Neumann algebras, then these coordinate systems define a cost $I(\omega)$ for
any transport plan $\omega\in T(\mathbf{A},\mathbf{B})$ from $\mathbf{A}$ to
$\mathbf{B}$ by
\begin{equation}
I(\omega)=\omega\left(  \sum_{i=1}^{d}\left|  k_{i}\otimes1-1\otimes(S_{\nu
}l_{i}^{\ast}S_{\nu})\right|  ^{2}\right)  \label{eindDimKoste}%
\end{equation}
in analogy to the classical transport cost for a transport plan $\pi$ between
two measures on the same subset $X$ of $\mathbb{R}^{d}$
\begin{equation}
I(\pi)=\int_{X\times X}\left\|  x-y\right\|  ^{2}d\pi(x,y) \label{klasKoste}%
\end{equation}
in terms of the Euclidean norm on $\mathbb{R}^{d}$. Wasserstein distance is
obtained as the minimum of the square root of the cost.

Note, however, the inclusion of the conjugate linear operator $S_{\nu}%
:H_{B}\rightarrow H_{B}$ from Tomita-Takesaki theory, defined by%
\[
S_{\nu}b\Lambda_{\nu}:=b^{\ast}\Lambda_{\nu}.
\]
The inclusion of this operator ensures that we get the usual metric
properties, in particular the triangle inequality and zero distance between a
system and itself. For more general von Neumann algebras, $S_{\nu}$ can be
unbounded, leading to technical complications which we can circumvent in
Section \ref{AfdW} by an alternative approach containing (\ref{eindDimKoste})
as special case. For the examples discussed in Sections \ref{AfdKlasVb} and
\ref{AfdKwantVb}, the formulation (\ref{eindDimKoste}) will be sufficient
though. In addition it gives a bird's-eye view of both the similarities and
the differences between the classical and noncommutative cases.

In relation to the latter point, notice that for abelian $B$, finite
dimensional or not, one has $S_{\nu}b^{\ast}S_{\nu}=b$ for all $b\in B$. In
this way the classical cost (\ref{klasKoste}) can be recovered as a special
case of the form in (\ref{eindDimKoste}), at least for a bounded subset $X$ of
$\mathbb{R}^{d}$.

Of course, as opposed to the case of Wasserstein distance between measures on
the same space $X$, (\ref{eindDimKoste}) is also set up to provide costs and
consequently distances between systems on different von Neumann algebras. This
is of central importance to our aims, allowing one to analyze a system by
comparing it to simpler or well understood systems, including those on other algebras.

The Wasserstein distance from $\mathbf{A}$ to $\mathbf{B}$ corresponding to
the chosen coordinate systems, is defined as
\begin{equation}
W(\mathbf{A},\mathbf{B}):=\inf_{\omega\in T(\mathbf{A},\mathbf{B})}%
I(\omega)^{1/2}. \label{spesW}%
\end{equation}
This infimum is reached, as will be seen in Section \ref{AfdW}, therefore
optimal transport plans (i.e., plans with minimum cost) always exist. The key
point in this definition, though, is that the infimum is over the set of all
transport plans from $\mathbf{A}$ to $\mathbf{B}$, making this Wasserstein
distance a natural way to to obtain optimal forms of the transport conditions
in the previous subsection.

To apply this in practice and particular examples, one should keep in mind
that the $k$ and $l$ appearing in a Wasserstein distance are viewed as
``coordinates'' measuring particular aspects of $\mathbf{A}$ and $\mathbf{B}$
respectively. In particular, they may be chosen to be relevant observables in
$\mathbf{A}$ and $\mathbf{B}$. In addition, for every $p$ one can aim to
choose $k_{p}$ and $l_{p}$ to measure corresponding quantities in $\mathbf{A}$
and $\mathbf{B}$, say spin in some direction of a particular spin site in
each, in order for the difference between $k_{p}$ and $l_{p}$ in
(\ref{eindDimKoste}) to have a sensible physical meaning.


A cardinal point regarding (\ref{eindDimKoste}) is that zero cost between a
quantum system and itself can always be attained by using a maximally
entangled state as the transport plan. This is why we have%
\[
W(\mathbf{A},\mathbf{A})=0
\]
as one would expect of a distance function. This appears to be a simple point,
but it is subtler than one may initially expect. The key aspect of this is
illustrated as part of the next example.

\begin{example}
\label{VbWvirMn}
Consider the standard form of $M_{n}$ as described in Example
\ref{VbEindDimStandVorm}. Note that we can identify $A\odot A^{\prime}$ with
$M_{n}\otimes_{s}M_{n}$ and $\pi(a)\otimes\pi^{\prime}(b)$ with $a\otimes
_{s}b$. To see that our setup indeed leads to $W(\mathbf{A},\mathbf{A})=0$ for
the von Neumann algebra $M_{n}$, consider the entangled state $\omega
=\delta_{\mu}$ of $\mu$ and $\mu^{\prime}$, defined on $A\odot A^{\prime}$ by%
\[
\delta_{\mu}(a\otimes_{s}b):=\left\langle \Lambda_{\mu},(a\otimes_{s}%
b)\Lambda_{\mu}\right\rangle =\operatorname*{Tr}(\rho_{\mu}^{1/2}a\rho_{\mu
}^{1/2}b^{\intercal})
\]
for $a,b\in M_{n}$. This is a pure state, since the vector $\Lambda_{\mu}\in
H_{A}$ is a pure state of the composite system, reducing to $\mu$ and
$\mu^{\prime}$ respectively, and in this sense we view $\delta_{\mu}$ as a
maximally entangled state of $\mu$ and $\mu^{\prime}$. (When $\rho_{\mu}$ is
diagonal with diagonal entries $p_{1},...,p_{n}$, we in fact have the
conventional and well known representation $\sum_{i=1}^{n}\sqrt{p_{i}}%
e_{i}\otimes e_{i}$ of $\Lambda_{\mu}$, but our standard form above is more
convenient below for the general case.)

Note that for $a,b\in M_{n}$ we have
\[
S_{\mu}^{\ast}\pi(a)^{\ast}S_{\mu}\pi(b)\Lambda_{\mu}=\pi^{\prime}((\rho_{\mu
}^{-1/2}a\rho_{\mu}^{1/2})^{\intercal})\pi(b)\Lambda_{\mu},
\]
hence $S_{\mu}^{\ast}\pi(a)^{\ast}S_{\mu}=\pi^{\prime}((\rho_{\mu}^{-1/2}%
a\rho_{\mu}^{1/2})^{\intercal})\in A^{\prime}$, which is the first key point
of this example. Furthermore, $S_{\nu}\pi(a^{\ast})S_{\nu}\Lambda_{\mu}%
=\pi(a)\Lambda_{\mu}$, therefore%
\[
\delta_{\mu}(\left|  \pi(a)-S_{\nu}\pi(a^{\ast})S_{\nu}\right|  ^{2})=0.
\]
Hence $\delta_{\mu}\in T(\mu,\mu)$ is a transport plan having zero cost. This
is what makes $W(\mathbf{A},\mathbf{A})=0$ possible, concluding this example.
\end{example}

Our general theory of Wasserstein distances and their basic metric properties,
including $W(\mathbf{A},\mathbf{A})=0$ and the triangle inequality, will be
treated in Section \ref{AfdW}. In some other approaches to Wasserstein
distances via transport plans (in the special case of states, rather than
systems), the property $W(\mu,\mu)=0$ and the usual triangle inequality are not obtained.
For example \cite{GMP} and \cite{dePT}, even though the latter follows a
similar approach to cost as this paper. It should be noted that these papers
nevertheless manage to make progress on their applications, and indeed
\cite{GPTV} has pointed out that a lack of the usual metric properties is
actually of value in certain contexts. A nice discussion and enlightening
illustrations of the obstacles to metric properties for quantum Wasserstein
distances via transport plans can be found in \cite{AF}.

For our purposes though, it is most natural to have the triangle inequality
(see for example its use in Corollary \ref{fbAfskat}) and $W(\mathbf{A}%
,\mathbf{A})=0$. On the other hand, asymmetry of a Wasserstein distance has a
natural place in our theory, as already hinted in Subsection
\ref{OndAfdUitbrDmvOV}. This will also be seen in the next section. The
bimodule setup for Wasserstein metrics indeed provide symmetric and asymmetric
Wasserstein distances, depending on the restrictions we place on the transport
plans, as has been discussed in \cite{D1, D2} and will be seen in Section
\ref{AfdW}.

\section{A simple classical example\label{AfdKlasVb}}

We now clarify the approach in the previous section further by illustrating it
with classical systems, each on a finite number of points. In this case the
transfer of probabilities are easy to follow, making the setup in Subsection
\ref{OndAfdUitbrDmvOV} particularly transparent.

\subsection{The setting\label{OndAfdKlasVbOpset}}

Represent the von Neumann algebra $A$ of complex-valued functions on $m$
points as $A=\mathbb{C}^{m}$, viewed as column matrices. Then $A$ is an
algebra with involution (and indeed an abelian von Neumann algebra) by using
entry-wise operations; for example the involution is given by taking the
complex conjugate of each entry. We label the points $1,...,m$. A standard
form of $A$ is obtained by representing the entries of $a\in A\,$\ along the
diagonal of an $m\times m$ matrix acting on $H_{A}=\mathbb{C}^{m}$, or simply
letting $A$ act entry-wise on $H_{A}$, so essentially we have $A$ in standard
form (with $A^{\prime}=A$ and $j_{A}:A\rightarrow A$ the identity map, hence
not relevant in this section). Similarly, for $B$ on $n$ points.

We consider Markov chains on these $m$ and $n$ point sets, respectively given
by the $m\times m$ and $n\times n$ transition matrices%
\[
\alpha=\left[  \alpha_{pq}\right]  \text{ \ and \ }\beta=\left[  \beta
_{rs}\right]  ,
\]
where
\begin{equation}
\alpha_{pq}\geq0\text{ \ and \ }\sum_{q=1}^{m}\alpha_{pq}=1,
\label{oorgWaarskDef}%
\end{equation}
and similarly for $\beta$. Note that $\alpha_{pq}$ is the probability for a
transition to occur from point $p$ to point $q$.

The states $\mu$ and $\nu$ on $A$ and $B$ respectively, are represented as row
matrices%
\[
\mu=\left[
\begin{array}
[c]{rcr}%
\mu_{1} & \cdots & \mu_{m}%
\end{array}
\right]  \text{ \ and \ }\nu=\left[
\begin{array}
[c]{ccc}%
\nu_{1} & \cdots & \nu_{n}%
\end{array}
\right]  ,
\]
with $\mu_{p}>0$ and $\mu_{1}+...+\mu_{m}=1$ and correspondingly for $\nu$.
These are assumed to be invariant,
\[
\mu\alpha=\mu\text{ \ and \ }\nu\beta=\nu,
\]
in terms of usual matrix multiplication, i.e., $\mu\circ\alpha=\mu\alpha$.
Correspondingly $\alpha$ acts on $A$ through usual matrix multiplication,
$\alpha(a)=\alpha a$, hence we have a system $\mathbf{A}=(A,\alpha,\mu)$.

The system $\mathbf{B}=\left(  B,\beta,\nu\right)  $ is said to be in
\emph{detailed balance}, when%
\[
\nu_{r}\beta_{rs}=\nu_{s}\beta_{sr},
\]
for all $r,s=1,...,n$, i.e., the transfer of probability from point $r$ to
point $s$, is the same as from $s$ to $r$. This is a strictly stronger
property than invariance, except for $n=2$, in which case the two properties
are equivalent.

In the current representation, the dual (from Subsection \ref{OndAfdDef}) of
$\mathbf{A}$ is the system $\mathbf{A}^{\prime}=(A,\alpha^{\prime},\mu)$ with
$\alpha^{\prime}$ given by%
\[
\alpha_{pq}^{\prime}=\frac{\mu_{q}}{\mu_{p}}\alpha_{qp}%
\]
for all $p,q=1,...,m$. This tells us that $\mathbf{A}'$ is the
time-reversal of $\mathbf{A}$. Detailed balance of $\mathbf{B}$ can thus be
expressed as
$\beta'=\beta$, i.e.,
\[
\mathbf{B}' = \mathbf{B}.
\]

\subsection{Extending probability transfer to or from a point}

The total transfer of probability from the whole of $\mathbf{A}$ to a single
point $q$ in $\mathbf{A}$, can be written as%
\begin{equation}
\sum_{p=1}^{m}\mu_{p}\alpha_{pq}=\mu_{q}, \label{KlasVervoerNaPunt}%
\end{equation}
which simply expresses the invariance $\mu\alpha=\mu$. In Subsection
\ref{OndAfdUitbrDmvOV} we pointed out that $\mathbf{A}\omega\mathbf{B}$
generalizes this condition, being viewed as probability transfer from
$\mathbf{A}$ to $\mathbf{B}$, 
instead of from $\mathbf{A}$ to a single point in
$\mathbf{A}$. We interpreted $\mathbf{A}\omega\mathbf{B}$ as a refinement of
$\mu\alpha=\mu$. This will now be seen concretely for the current classical
systems. (We now write $\otimes$, instead of $\odot$ as in Subsection
\ref{OndAfdUitbrDmvOV}, since in finite dimensions we do not need to
distinguish between algebraic tensor products and completions thereof.)

Using the Kronecker product convention%
\[
\mu\otimes\nu=\left[
\begin{array}
[c]{rrr}%
\mu\nu_{1} & \cdots & \mu\nu_{n}%
\end{array}
\right]  ,
\]
a transport plan $\omega$ from $\mu$ to $\nu$ can be represented as
\[
\omega=\left[
\begin{array}
[c]{rrrcrrr}%
\omega_{11} & \cdots & \omega_{m1} & \cdots & \omega_{1n} & \cdots &
\omega_{mn}%
\end{array}
\right]  ,
\]
with $\omega_{pr}\geq0$, in terms of which the conditions
\[
\sum_{r=1}^{n}\omega_{pr}=\mu_{p}\text{ \ and \ }\sum_{p=1}^{m}\omega_{pr}%
=\nu_{r}%
\]
express the coupling properties of $\omega$. Note that in this finite context,
the transport plan $\omega$ says exactly that the portion $\omega_{pr}$ of the
probability $\mu_{p}$ of point $p$ in $\mathbf{A}$, is transported to point
$r$ in $\mathbf{B}$, with the coupling conditions above expressing the total
transport from $p$ and to $r$ respectively.

Then%
\[
\omega\left(  \alpha(a)\otimes b\right)  =\omega\left[
\begin{array}
[c]{c}%
\alpha(a)b_{1}\\
\vdots\\
\alpha(a)b_{n}%
\end{array}
\right]  =\sum_{p,q,r}\omega_{pr}\alpha_{pq}a_{q}b_{r},
\]
and similarly%
\[
\omega\left(  a\otimes\beta^{\prime}(b)\right)  =\sum_{q,r,s}\omega_{qs}%
a_{q}\beta_{sr}^{\prime}b_{r}.
\]
For $\omega$ to be a transport plan from $\mathbf{A}$ to $\mathbf{B}$, written
$\mathbf{A}\omega\mathbf{B}$ or $\omega\in T(\mathbf{A},\mathbf{B})$, requires
the further condition $\omega\left(  \alpha(a)\otimes b\right)  =\omega\left(
a\otimes\beta^{\prime}(b)\right)  $, expressed by%
\begin{equation}
\sum_{p=1}^{m}\omega_{pr}\alpha_{pq}=\sum_{s=1}^{n}\omega_{qs}\beta
_{sr}^{\prime}=\sum_{s=1}^{n}\omega_{qs}\frac{\nu_{r}}{\nu_{s}}\beta_{rs},
\label{KlasVervoerNaB}%
\end{equation}
which when summed over $r$ gives exactly (\ref{KlasVervoerNaPunt}). Hence
(\ref{KlasVervoerNaB}) is clearly a refinement of (\ref{KlasVervoerNaPunt}).

Essentially the probability $\mu_{p}$ of $p$ in $\mathbf{A}$ is split into $n$
``compartments'' having probabilities $\omega_{p1},...,\omega_{pn}$
respectively, with the transition probability $\alpha_{pq}$ to $q$ in
$\mathbf{A}$ then acting on each of these compartments. The sum of the
probabilities transported from the $r$'th compartment of each of $\mathbf{A}%
$'s points to $q$ by $\alpha$, is then required to be given by the total
transport by $\beta^{\prime}$ from $q$'s compartments to point $r$ in
$\mathbf{B}$.

Paraphrasing this: the transport from $r$ to $q$ by $\alpha$, is equal to the
transport from $q$ to $r$ by $\beta^{\prime}$. The analogy with detailed
balance is clear. Keep in mind that we are particularly interested in the case
where $\beta^{\prime}=\beta$, i.e., where $\mathbf{B}$ is in detailed balance.

In this way $\alpha$ is required in some measure to adhere to rules laid down
by $\beta$. In the one extreme case where $\omega=\mu\otimes\nu$, no
restrictions are placed on $\alpha$ by $\mathbf{A}\omega\mathbf{B}$. In the
other extreme, with $\left(  A,\mu\right)  =\left(  B,\nu\right)  $ and
$\omega$ the diagonal measure associated to $\mu$, i.e., $\omega_{pp}=\mu_{p}$
while $\omega_{pq}=0$ for $p\neq q$, the condition $\mathbf{A}\omega
\mathbf{B}$ forces $\alpha=\beta$. The role of a Wasserstein distance
$W(\mathbf{A},\mathbf{B})$ is to optimize this to determine how much of, or to
what extent, $\mathbf{B}$'s behaviour is reflected in that of $\mathbf{A}$.

Analogously for the transfer of probability from a single point $p$ in
$\mathbf{A}$ to the whole of $\mathbf{A}$, versus $\mathbf{B}\psi\mathbf{A}$.
The latter acts as a refinement of the dual property $\alpha(1_{m})=1_{m}$
with $1_{m}$ the column with $1$ as all $m$ entries, i.e., $\mu\alpha^{\prime
}=\mu$ (or $\mu^{\prime}\circ\alpha^{\prime}=\mu^{\prime}$ in the general
abstract notation), instead of $\mu\alpha=\mu$. As above, $\mathbf{B}%
\psi\mathbf{A}$ is expressed by
\begin{equation}
\sum_{r=1}^{n}\psi_{rp}\beta_{rs}=\sum_{q=1}^{m}\psi_{sq}\alpha_{qp}^{\prime
}=\sum_{q=1}^{m}\psi_{sq}\frac{\mu_{p}}{\mu_{q}}\alpha_{pq},
\label{KlasVervoerNaA}%
\end{equation}
the sum over $s$ of which gives the transfer of probability from a single
point $p$ in $\mathbf{A}$ to the whole of $\mathbf{A}$,%
\[
\sum_{q=1}^{m}\mu_{p}\alpha_{pq}=\mu_{p},
\]
which indeed simply expresses $\alpha(1_{m})=1_{m}$. In the paraphrased form
as above, (\ref{KlasVervoerNaA}) says that the transport from $p$ to $s$ by
$\beta$, is equal to the transport from $s$ to $p$ by $\alpha^{\prime}$. Again
this requires $\mathbf{A}$ to adhere in some degree to some aspects of
$\mathbf{B}$'s behaviour, the extent of which will be measured by the
Wasserstein distance $W(\mathbf{B},\mathbf{A})$.

To clarify the connection to the case $\mathbf{A}\omega\mathbf{B}$, we can
without loss of generality use the dual transport plan $\omega^{\prime}$,
which in this context (the general case to be treated in Section
\ref{AfdWEiensk}) is simply the reverse of a transport plan $\omega$ from
$\mathbf{A}$ to $\mathbf{B}$, namely
\[
\omega^{\prime}=\left[
\begin{array}
[c]{rrrcrrr}%
\omega_{11}^{\prime} & \cdots & \omega_{n1}^{\prime} & \cdots & \omega
_{1m}^{\prime} & \cdots & \omega_{nm}^{\prime}%
\end{array}
\right]  ,
\]
where $\omega_{rp}^{\prime}=\omega_{pr}$, with $\mathbf{B}\omega^{\prime
}\mathbf{A}$ therefore expressed by%
\[
\sum_{q=1}^{m}\omega_{qs}\frac{\mu_{p}}{\mu_{q}}\alpha_{pq}=\sum_{r=1}%
^{n}\omega_{pr}\beta_{rs}.
\]

Note that there is an asymmetry between $\mathbf{A}\omega\mathbf{B}$ and
$\mathbf{B}\omega^{\prime}\mathbf{A}$. This boils down to $\mathbf{A}%
\omega\mathbf{B}$ being equivalent to $\mathbf{B}^{\prime}\omega^{\prime
}\mathbf{A}^{\prime}$, as is easily checked, rather than to $\mathbf{B}%
\omega^{\prime}\mathbf{A}$. The set of transport plans $T(\mathbf{B}%
,\mathbf{A})$ need not consist of the reversals of the transport plans
$T(\mathbf{A},\mathbf{B})$. In particular, the optimal transport plan from
$\mathbf{B}$ to $\mathbf{A}$ need not be the reverse (i.e., the dual) of the
optimal transport plan from $\mathbf{A}$ to $\mathbf{B}$. Consequently, we can
typically expect $W(\mathbf{A},\mathbf{B})\neq W(\mathbf{B},\mathbf{A})$. If
both $\mathbf{A}$ and $\mathbf{B}$ satisfy detailed balance, then
$\mathbf{A}\omega\mathbf{B}$ and $\mathbf{B}\omega^{\prime}\mathbf{A}$ are
equivalent, which will ensure $W(\mathbf{A},\mathbf{B})=W(\mathbf{B}%
,\mathbf{A})$.




\subsection{An explicit example\label{OndAfdKlasVbUitdrGeval}}

In order to do explicit calculations easily, we consider an example for the
case $m=4$ and $n=2$. This example also illustrates how an appropriate
Wasserstein distance can be chosen, and the low dimensions allow us to find a
simple formula for it, which in turn sheds light on its role.

Let both the systems $\mathbf{A}$ and $\mathbf{B}$ be given by Markov chains
as in the previous subsection, where we take $\mathbf{A}$ to be the composite
of two 2-point systems, while $\mathbf{B}$ is a 2-point system. In the case of
$\mathbf{A}$, the indices in $\alpha_{pq}$ refer to the points of its 4-point
set $\{1,2\}\times\{1,2\}$ labeled as follows in terms of the original
2-point set $\{1,2\}$:
\[%
\begin{array}
[c]{cc}%
1\equiv(1,1), & 2\equiv(2,1)\\
3\equiv(1,2), & 4\equiv(2,2)
\end{array}
\]
For $\mathbf{B}$ on the other hand, we adapt the notation for the transition
matrix as follows:%
\[
\beta=\left[
\begin{array}
[c]{rr}%
\tilde{r} & r\\
s & \tilde{s}%
\end{array}
\right]
\]
with $0\leq r,s\leq1$, and
\[
\tilde{r}=1-r\text{ \ and \ }\tilde{s}=1-s.
\]
Recall that due to $\nu\beta=\nu$, $\mathbf{B}$ necessarily satisfies detailed
balance, namely
\[
\nu_{1}r=\nu_{2}s.
\]

We interpret $\mathbf{A}$ as a system consisting of two classical spins, and
$\mathbf{B}$ as a single classical spin. Natural ``coordinates'', measuring
these spin values are therefore%
\[
k_{1}=\frac{1}{2}\left[
\begin{array}
[c]{r}%
1\\
-1
\end{array}
\right]  \otimes1_{2}\text{ \ and \ }k_{2}=1_{2}\otimes\frac{1}{2}\left[
\begin{array}
[c]{r}%
1\\
-1
\end{array}
\right]
\]
for $\mathbf{A}$, where $1_{2}$ is the column with $1$ as both entries, and%
\begin{equation}
l=l_{1}=l_{2}=\frac{1}{2}\left[
\begin{array}
[c]{r}%
1\\
-1
\end{array}
\right]  \label{2puntKoord}%
\end{equation}
for $\mathbf{B}$. This will define the Wasserstein distances from $\mathbf{A}$
to $\mathbf{B}$, and from $\mathbf{B}$ to $\mathbf{A}$; see Subsection
\ref{OndAfdW}.

We assume that the cost $\omega c_{AB}$ for a transport plan $\omega\in
T(\mathbf{A},\mathbf{B})$ is given as in (\ref{eindDimKoste}) by the
\emph{cost matrix}
\[
c_{AB}=\left|  k_{1}\otimes1_{2}-1_{4}\otimes l\right|  ^{2}%
\]
from systems on $A$ to systems on $B$ (here $S_{\nu}l^{\ast}S_{\nu}=l$, since
$B$ is abelian), which in transposed form (for typographical convenience) is%
\[
c_{AB}^{\intercal}=(0,1,0,1,1,0,1,0).
\]
(In this subsection we often write row matrices as tuples to clearly delineate
the entries). This measures the square of the difference in the value of the
``first'' spin in $\mathbf{A}$ and the spin in $\mathbf{B}$. Lower cost should
therefore correspond to transitions from the set $\{1,3\}$ to the set
$\{2,4\}$ and vice versa in $\mathbf{A}$, in order to conform to transitions
between $1$ and $2$ in $\mathbf{B}$.

A convenient parametrization of the set $T(\mu,\nu)$ of transport plans
$\omega$ from $\mu$ to $\nu$ for this cost matrix, is%
\[
\omega=(\mu_{1}-\gamma_{1},\gamma_{2},\mu_{3}-\gamma_{3},\gamma_{4},\gamma
_{1},\mu_{2}-\gamma_{2},\gamma_{3},\mu_{4}-\gamma_{4}),
\]
where $0\leq\gamma_{i}\leq\mu_{i}$ for $i=1,2,3,4$, and
\[
\mu_{1}-\gamma_{1}+\gamma_{2}+\mu_{3}-\gamma_{3}+\gamma_{4}=\nu_{1}.
\]
This is convenient, since then the cost of the transport plan is then given by%
\[
\omega c_{AB}=\gamma_{1}+\gamma_{2}+\gamma_{3}+\gamma_{4}\text{,}%
\]
the minimum of which over all $\omega\in T(\mathbf{A},\mathbf{B})$ is the
squared Wasserstein distance $W(\mathbf{A},\mathbf{B})^{2}$.

Any $\omega\in T(\mathbf{A},\mathbf{B})$ is described by the further
conditions
\begin{align*}
(\mu_{1}\alpha_{12}+\mu_{3}\alpha_{32})+(-\gamma_{1}\alpha_{12}+\gamma
_{2}\alpha_{22}-\gamma_{3}\alpha_{32}+\gamma_{4}\alpha_{42})  &  =\mu
_{2}s+\gamma_{2}(\bar{r}-s)\\
(\mu_{1}\alpha_{14}+\mu_{3}\alpha_{34})+(-\gamma_{1}\alpha_{14}+\gamma
_{2}\alpha_{24}-\gamma_{3}\alpha_{34}+\gamma_{4}\alpha_{44})  &  =\mu
_{4}s+\gamma_{4}(\bar{r}-s)
\end{align*}
and
\begin{align*}
(\mu_{2}\alpha_{21}+\mu_{4}\alpha_{41})+(\gamma_{1}\alpha_{11}-\gamma
_{2}\alpha_{21}+\gamma_{3}\alpha_{31}-\gamma_{4}\alpha_{41})  &  =\mu
_{1}r+\gamma_{1}(\bar{r}-s)\\
(\mu_{2}\alpha_{23}+\mu_{4}\alpha_{43})+(\gamma_{1}\alpha_{13}-\gamma
_{2}\alpha_{23}+\gamma_{3}\alpha_{33}-\gamma_{4}\alpha_{43})  &  =\mu
_{3}r+\gamma_{3}(\bar{r}-s),
\end{align*}
which we now view as an extended or generalized detailed balance condition.
Due to the invariance $\mu\alpha=\mu$, there are other equivalent ways of
stating these conditions, however, the form above emphasizes transitions in
$\mathbf{A}$ from the set $\{1,3\}$ to the set $\{2,4\}$ in the first two
conditions, and vice versa in the last two. This generalized detailed balance
condition is trivial when $\omega=\mu\otimes\nu$, though, and is therefore not
of much use as it stands. We need to quantify how strong it is, which is where
the cost and Wasserstein distance come into play.

In particular, zero cost $\omega c_{AB}=0$ corresponds to the conditions%
\begin{equation}
\mu_{1}\alpha_{12}+\mu_{3}\alpha_{32}=\mu_{2}s\text{, \ }\mu_{1}\alpha
_{14}+\mu_{3}\alpha_{34}=\mu_{4}s \label{VbfbVwd1}%
\end{equation}
and%
\begin{equation}
\mu_{2}\alpha_{21}+\mu_{4}\alpha_{41}=\mu_{1}r\text{, \ }\mu_{2}\alpha
_{23}+\mu_{4}\alpha_{43}=\mu_{3}r, \label{VbfbVwd2}%
\end{equation}
which in our picture therefore give the form of the generalized detailed
balance condition in $\mathbf{A}$ closest (in terms of our chosen cost) to the
detailed balance in $\mathbf{B}$.
We note that these conditions are indeed satisfied by a multitude of systems
$\mathbf{A}$ and $\mathbf{B}$, hence zero cost can indeed be reached, in this
case for the uniquely determined transport plan%
\[
\omega=(\mu_{1},0,\mu_{3},0,0,\mu_{2},0,\mu_{4}).
\]
It is also clear that in this example $\mathbf{B}$ is uniquely determined by
$\mathbf{A}$ when $\omega c_{AB}=0$. In particular, $\mu_{1}+\mu_{3}=\nu_{1}$.

In the latter, we obviously have $W(\mathbf{A},\mathbf{B})=0$. However, more
generally for any $\mathbf{A}$ and $\mathbf{B}$ in this example, $W$ gives us
bounds on how far the generalized detailed balance condition is from the
optimal form (\ref{VbfbVwd1}) and (\ref{VbfbVwd2}). To see this, a simple
definition of the deviation from this optimal form can for example be taken
as
\begin{align*}
f  &  =\left|  \mu_{1}\alpha_{12}+\mu_{3}\alpha_{32}-\mu_{2}s\right|  +\left|
\mu_{1}\alpha_{14}+\mu_{3}\alpha_{34}-\mu_{4}s\right| \\
&  +\left|  \mu_{2}\alpha_{21}+\mu_{4}\alpha_{41}-\mu_{1}r\right|  +\left|
\mu_{2}\alpha_{23}+\mu_{4}\alpha_{43}-\mu_{3}r\right|  ,
\end{align*}
from which we clearly see that%
\begin{align*}
f  &  \leq4\left(  1+\left|  \tilde{r}-s\right|  \right)  W(\mathbf{A}%
,\mathbf{B})^{2}\\
&  =\left\{
\begin{array}
[c]{ll}%
4(\tilde{r}+\tilde{s})W(\mathbf{A},\mathbf{B})^{2} & \text{if }r+s\leq1\\
4(r+s)W(\mathbf{A},\mathbf{B})^{2} & \text{if }r+s\geq1.
\end{array}
\right.
\end{align*}
For the special case%
\[
r+s=1
\]
this bound takes the simple form%
\[
f\leq4W(\mathbf{A},\mathbf{B})^{2}.
\]
Because of the signs in front of the $\gamma_{i}$'s in the conditions for
$\omega\in T(\mathbf{A},\mathbf{B})$ above, this is not necessarily a very
tight bound, though. A version of all this for each of the individual terms in
$f$ can also be written down.



Perfectly analogous results are obtained for transport from $\mathbf{B}$ to
$\mathbf{A}$, in terms of $\alpha^{\prime}$ instead of $\alpha$, and
$W(\mathbf{B},\mathbf{A})$ instead of $W(\mathbf{A},\mathbf{B})$. We note that
$\mathbf{B} \omega' \mathbf{A}$ 
can also be expressed via 
$\mathbf{A}' \omega \mathbf{B}'$ 
(see Section \ref{AfdWEiensk}).

To be clear, the cost matrix from $B$ to $A$ is obtaining by simply swapping
the roles of $A$ and $B$. That is,
\[
c_{BA}=\left|  l\otimes1_{4}-1_{2}\otimes k_{1}\right|  ^{2},
\]
while the parametrization of the set $T(\nu,\mu)$ of transport plans $\omega$
from $\nu$ to $\mu$ becomes%
\[
\omega^{\prime}=(\mu_{1}-\gamma_{1},\gamma_{1},\gamma_{2},\mu_{2}-\gamma
_{2},\mu_{3}-\gamma_{3},\gamma_{3},\gamma_{4},\mu_{4}-\gamma_{4}).
\]
We then have equal cost in the two directions, $\omega^{\prime}c_{BA}=\omega
c_{AB}$.

For zero transport cost from $\mathbf{A}$ to $\mathbf{B}$, the condition
$\mathbf{A}\omega\mathbf{B}$ is expressed by (\ref{VbfbVwd1}) and
(\ref{VbfbVwd2}).
On the other hand, for zero transport cost from $\mathbf{B}$ to $\mathbf{A}$,
the condition $\mathbf{B}\omega^{\prime}\mathbf{A}$ is expressed by
\begin{equation}
\alpha_{12}+\alpha_{14}=\alpha_{32}+\alpha_{34}=r\text{ \ and \ }\alpha
_{21}+\alpha_{23}=\alpha_{41}+\alpha_{43}=s \label{Vb_Bw'A}%
\end{equation}
That is, the transport conditions $\mathbf{A}\omega\mathbf{B}$ and
$\mathbf{B}\omega^{\prime}\mathbf{A}$ place different restrictions on
$\mathbf{A}$, which means that $W(\mathbf{A},\mathbf{B})$ and $W(\mathbf{B}%
,\mathbf{A})$ are in general not simultaneously zero:
for given $(A,\mu)$ and $\mathbf{B}$, the values $W(\mathbf{A},\mathbf{B})=0$
and $W(\mathbf{B},\mathbf{A})=0$ are respectively reached on different sets of
$\alpha$'s. In other words, there are $\mathbf{A}$ and $\mathbf{B}$ such that
\begin{equation}
W(\mathbf{A},\mathbf{B})=0\text{ \ and \ }W(\mathbf{B},\mathbf{A})\neq0,
\label{asimVbDeel1}%
\end{equation}
as well as $\mathbf{A}$ and $\mathbf{B}$ such that%
\begin{equation}
W(\mathbf{A},\mathbf{B})\neq0\text{ \ and \ }W(\mathbf{B},\mathbf{A})=0.
\label{asimVbDeel2}%
\end{equation}
For example,
in the case $\alpha_{pq}>0$ for all $p$ and $q$, transform $\alpha$ such that
$\alpha_{11}\mapsto\alpha_{11}+\varepsilon$, $\alpha_{13}\mapsto\alpha
_{13}-\varepsilon$, $\mu_{2}\alpha_{21}\mapsto\mu_{2}\alpha_{21}-\mu
_{1}\varepsilon$ and $\mu_{2}\alpha_{23}\mapsto\mu_{2}\alpha_{23}+\mu
_{1}\varepsilon$, while all the other $\alpha_{pq}$'s are left unchanged.
Then, for small enough $\left|  \varepsilon\right|  $, (\ref{oorgWaarskDef}),
(\ref{KlasVervoerNaPunt}) and (\ref{Vb_Bw'A}) are preserved, but
(\ref{VbfbVwd2}) not. So even if $\mathbf{A}\omega\mathbf{B}$ and
$\mathbf{B}\omega^{\prime}\mathbf{A}$ both hold initially, then after this
transformation only $\mathbf{B}\omega^{\prime}\mathbf{A}$ continues to hold.
In particular, in the case of the unique couplings $\omega$ and $\omega
^{\prime}$ (determined by $\gamma_{1}=...=\gamma_{4}=0$) which can
respectively lead to $W(\mathbf{A},\mathbf{B})=0$ and $W(\mathbf{B}%
,\mathbf{A})=0$, only the latter still holds after the transformation.

The cost matrix
\[
c_{AB}=\left|  k_{1}\otimes1_{2}-1_{4}\otimes l\right|  ^{2}+\left|
k_{2}\otimes1_{2}-1_{4}\otimes l\right|  ^{2}%
\]
can similarly be studied, and as may be expected, leads to a weak form of the
usual detailed balance condition $\mu_{1}\alpha_{14}=\mu_{4}\alpha_{41}$
between the point $(1,1)$ and $(2,2)$ in $\mathbf{A}$.

\section{Quantum detailed balance and KMS duals\label{AfdFB&KMS}}

In this section we discuss some points around quantum detailed balance and a
related type of dual of u.p. maps, namely KMS-duals. The latter play a basic
role in Sections \ref{AfdW} and \ref{AfdWEiensk}, while the former provides
background to how we connect detailed balance and Wasserstein metrics in those
two sections. As in previous sections we only consider discrete time. To
extend to continuous time, one merely has to replace $\alpha$ by $\alpha_{t}$
for all $t\geq0$, throughout. Sections \ref{AfdW} and \ref{AfdWEiensk} will
handle this more generally and systematically.

We are interested in defining detailed balance conditions and extensions
thereof for a quantum system $\mathbf{A}$. As was seen in the previous
section, for a classical Markov chain $\mathbf{A}$ the dual system
$\mathbf{A}^{\prime}$ is exactly the time-reversal of the dynamics in the
classical sense. Detailed balance of a classical Markov chain $\mathbf{A}$ is
therefore given by%
\begin{equation}
\mathbf{A}^{\prime}=\mathbf{A}. \label{AbstrakteKlasFB}%
\end{equation}
However, in the quantum or abstract noncommutative setting, this condition no
longer makes sense, as $\alpha^{\prime}:A^{\prime}\rightarrow A^{\prime}$ as
opposed to $\alpha:A\rightarrow A$, with $A^{\prime}\neq A$. Therefore a
direct mathematical translation of (\ref{AbstrakteKlasFB}) to the quantum case
does not appear possible. Consequently one has to explore other options to
define quantum detailed balance. This has lead to a range of possibilities
being proposed and studied over the past five decades. This includes \cite{Ag,
Al, An, CW, DF, DS1, FR, FU, KFGV, M, MS, RLC}, to mention only a few of the
papers. Although our main concern is to develop a general approach which can
ultimately extend such conditions into the non-equilibrium realm, we
nevertheless hope that our approach may also shed some light on possible
alternative quantum detailed balance conditions and their meaning. Indeed,
there is still some uncertainty about the best way to approach quantum
detailed balance, as has been pointed out in \cite{Diehl15, RLC}

There is in fact an alternative dual, the KMS-dual, of $\alpha$, denoted by
$\alpha^{\sigma}:A\rightarrow A$, which does map from $A$ to $A$, giving us
the system $\mathbf{A}=(A,\alpha^{\sigma},\mu)$. However, that does not
necessarily mean that $\mathbf{A}^{\sigma}=\mathbf{A}$ is a physically
sensible definition of quantum detailed balance; also see Example
\ref{sfb-vs-fb}.
Nevertheless, the KMS dual does appear in what are known as standard quantum
detailed balance conditions, one of which will be defined shortly.

\begin{definition}
\label{KMS-duaal}Given a u.p. map $E:A\rightarrow B$ satisfying $\nu\circ
E=\mu$ for some $\mu\in\mathfrak{F}(A)$ and $\nu\in\mathfrak{F}(B)$, we define
its \emph{KMS-dual} (w.r.t. $\mu$ and $\nu$) as
\[
E^{\sigma}:=j_{A}\circ E^{\prime}\circ j_{B}:B\rightarrow A.
\]
\end{definition}

Correspondingly $E^{\sigma}$ is u.p. and $\mu\circ E^{\sigma}=\nu$, while
$(E^{\sigma})^{\sigma}=E$ follows from $(E^{\prime})^{\prime}=E$. Under the
given assumptions, $E^{\sigma}$ is necessarily normal. If $E$ is u.c.p., then
so is $E^{\sigma}$. We note that KMS-duals have been discussed and used in
\cite{Pet, OPet, GL, C, FR, DS2}.

If $A$ is abelian, $j_{A}$ is the identity map, hence
\[
\alpha^{\sigma}=\alpha^{\prime},
\]
which indicates that at least in some ways the KMS dual is a natural
mathematical extension of time-reversal of the classical Markov chain in
Subsection \ref{OndAfdKlasVbOpset} to the general noncommutative framework.

The KMS-dual allows us to define a good example of a quantum detailed
property, which we can use in subsequent sections to illustrate certain points
about the general approach of this paper. This is done via a \emph{reversing
operation} $\theta$ for $(A,\mu)$, which is a $\ast$-anti-homomorphism
$\theta:A\rightarrow A$ (i.e., $\theta$ is linear, $\theta(a^{\ast}%
)=\theta(a)^{\ast}$, and $\theta(a_{1}a_{2})=\theta(a_{2})\theta(a_{1})$) such
that $\theta^{2}=\operatorname{id}_{A}$ and $\mu\circ\theta=\mu$. We then
consider the \emph{reverse} of $\mathbf{A}$ (with respect to $\theta$),
namely
\[
\mathbf{A}^{\leftarrow}:=(A,\alpha^{\leftarrow},\mu),
\]
with
\[
\alpha^{\leftarrow}:=\theta\circ\alpha^{\sigma}\circ\theta
\]
called the $\theta$\emph{-KMS dual} of $\alpha$ (see \cite{BQ2}). The latter
is also denoted by $\alpha^{\theta}$, but the reversing operation will always
be clear from context. It is easily checked that $\alpha^{\leftarrow
\leftarrow}\equiv(\alpha^{\leftarrow})^{\leftarrow}=\alpha$ since
$\theta^{\sigma}=\theta$; see \cite[Proposition 6.4]{DS2}. Hence
$\mathbf{A}^{\leftarrow\leftarrow}=\mathbf{A}$.

Now we define \emph{standard quantum detailed balance with respect to the
reversing operation} $\theta$ of $\mathbf{A}$, abbreviated as $\theta
$\emph{-sqdb}, by requiring that
\[
\mathbf{A}^{\leftarrow}=\mathbf{A}.
\]

\begin{example}
\label{sfb-vs-fb}Consider a system 
$\mathbf{A} = (M_{n},\alpha,\mu)$ 
on $M_n$
with $\alpha$ and $\mu$ represented directly on $M_{n}$ instead of on the
standard form discussed in Example \ref{VbEindDimStandVorm}. Then
\[
\alpha^{\sigma}(a)=\alpha^{\prime}(a^{\intercal})^{\intercal},
\]
with both these duals expressed in this representation, simply because
$j_{M_{n}\otimes1_{n}}$ is given by transposition. Assuming that $\mu$ is
given by a diagonal density matrix, and taking the reversing operation
$\theta$ on $M_{n}$ to be transposition (this is a standard choice), we
consequently have $\alpha^{\leftarrow}=\alpha^{\prime}$. In this case $\theta
$-sqdb is therefore expressed by $\alpha^{\prime}=\alpha$, i.e., by
\[
\mathbf{A}^{\prime}=\mathbf{A},
\]
where
$\mathbf{A}' =(M_{n}, \alpha', \mu)$,
emphasizing the similarity of $\theta$-sqdb to
the classical case (\ref{AbstrakteKlasFB}). The condition $\mathbf{A}^{\sigma
}=\mathbf{A}$ instead says $\theta\circ\alpha^{\prime}\circ\theta=\alpha$.
Essentially $j_{A}$ and $\theta$ cancel in $\theta$-sqdb.
\end{example}

For more on $\theta$\emph{-sqdb}, refer to \cite{BQ2, FR, FU}. Here we simply
note that $\theta$-sqdb can be expressed in terms of transport plans (or
balance) as set out in \cite[Corollary 6.7]{DS2}. The same is in fact true of
usual detailed balance of a Markov chain, as was discussed for example in
\cite[Section II]{Ver} and \cite[Section 3]{DS1}, though from a conceptually
different point of view and without connecting it to ideas from optimal transport.

%
%
%

\section{A simple quantum example\label{AfdKwantVb}}

To get an impression of our framework for quantum systems, specifically in
relation to detailed balance conditions, we take a brief look at a Wasserstein
distance between a very simple spin 1/2 system and a 2-point classical system.

Let $\mathbf{B}$ be as in Subsection \ref{OndAfdKlasVbUitdrGeval}. In this
section the quantum system $\mathbf{A}=(A,\alpha,\mu)$ will be defined
directly on $M_{2}$, instead of in standard form (also see Example
\ref{sfb-vs-fb}), with
\[
A=M_{2}\text{, \ }\alpha(a)=\lambda U_{\eta}^{\ast}aU_{\eta}+\tilde{\lambda
}U_{\varphi}^{\ast}aU_{\varphi}\text{ \ and \ }\mu(a)=\operatorname*{Tr}%
(\rho_{\mu}a)
\]
for all $a\in A$, in terms of the unitary and density matrices
\[
U_{\eta}=\left[
\begin{array}
[c]{rr}%
1 & 0\\
0 & e^{i\eta}%
\end{array}
\right]  \text{ \ and \ }\rho_{\mu}=\left[
\begin{array}
[c]{rr}%
\mu_{1} & 0\\
0 & \mu_{2}%
\end{array}
\right]
\]
respectively, with $\eta,\varphi\in\mathbb{R}$, $0\leq\lambda\leq1$ and
$\tilde{\lambda}=1-\lambda$. It is then elementary, though somewhat tedious,
to deduce the results below.

With transposition in the given basis serving as a reversing operation
$\theta$, $\mathbf{A}$ satisfies $\theta$-sqdb (Example \ref{sfb-vs-fb} gives
a definition sufficient for this example) for any values of 
$\eta$, $\varphi$ and $\lambda$. 
However, depending on certain Wasserstein distances from
$\mathbf{A}$ to $\mathbf{B}$, the former can have an additional property,
which may also be interpreted as a form of detailed balance, as will be
discussed below.

Taking the coordinates (as in Subsection \ref{OndAfdW}) for $\mathbf{A}$ as
the Pauli spin matrices,%
\[
k_{1}=\frac{1}{2}\left[
\begin{array}
[c]{rr}%
1 & 0\\
0 & -1
\end{array}
\right]  ,k_{2}=\frac{1}{2}\left[
\begin{array}
[c]{rr}%
0 & 1\\
1 & 0
\end{array}
\right]  ,k_{3}=\frac{1}{2}\left[
\begin{array}
[c]{rr}%
0 & -i\\
i & 0
\end{array}
\right]  ,
\]
and again using (\ref{2puntKoord}) for $\mathbf{B}$, also setting $l_{3}=l$,
then via (\ref{eindDimKoste}) and (\ref{spesW}) for $d=3$, the condition
$W(\mathbf{A},\mathbf{B})<1$ implies that $r+s>0$ and%
\begin{equation}
\lambda\sin\eta+\tilde{\lambda}\sin\varphi=0\text{.} \label{verfyndeFBVwd}%
\end{equation}
In fact, if $\mathbf{A}\omega\mathbf{B}$ for $\omega\neq\mu\otimes\nu$, and we
assume $r+s>0$, then (\ref{verfyndeFBVwd}) follows. However, $W(\mathbf{A}%
,\mathbf{B})<1$ implies all of this, and this inequality can indeed be
attained on certain sets of systems $\mathbf{A}$ and $\mathbf{B}$.

To see the significance of (\ref{verfyndeFBVwd}), consider any density
matrices $X$ and $Y$ for $\mathbf{A}$. Let
\[
\alpha^{\ast}(X)=\lambda U_{\eta}XU_{\eta}^{\ast}+\tilde{\lambda}U_{\varphi
}XU_{\varphi}^{\ast}%
\]
express the action of $\alpha$ in terms of $X$. Set
\[
V_{XY}=\operatorname*{Tr}(\alpha^{\ast}(X)Y)-\operatorname*{Tr}(XY).
\]
One then finds that $V_{XY}=V_{YX}$ if and only if (\ref{verfyndeFBVwd}) or
$\operatorname{Im}(X_{21}Y_{12})=0$ holds, where the latter refers to entries
of the matrices $X$ and $Y$.

For pure states $X$ and $Y$, this has a simple interpretation: If the physical
system is currently in the state $X$ (with $\mu$ now viewed as a ``reference''
state, rather than necessarily being the state in which the system finds
itself), then $\operatorname*{Tr}(XY)$ is the probability for the system being
found to be in state $Y$ when measuring an observable with $Y$ as an
eigenstate at the current time. Correspondingly for $\operatorname*{Tr}%
(\alpha^{\ast}(X)Y)$, but one step into the future, when the system is in
state $\alpha^{\ast}(X)$. Hence we can heuristically think of $V_{XY}$ as a
flow of probability from $X$ to $Y$ during one step of time evolution of the
state $X$. The condition $V_{XY}=V_{YX}$ tells us that this flow is the same
in both directions, i.e., it can be viewed as a form of detailed balance, at
least between pure states $X$ and $Y$ for which $\operatorname{Im}%
(X_{21}Y_{12})\neq0$. The condition $\operatorname{Im}(X_{21}Y_{12})=0$, on
the other hand, describes a small set in the Cartesian product of the set of
spin 1/2 pure states with itself, in the sense that the condition can cease to
hold due to an arbitrarily small change in either of the states $X$ or $Y$.

\section{Wasserstein distances\label{AfdW}}


We now have the framework in place and our basic approach set out, motivated,
and illustrated by simple examples. The next step is to systematically develop
the quantitative tool, namely Wasserstein distance from one system to another,
as well as related results, to provide the mathematical foundation for our
approach. So far we only defined and used Wasserstein distances in special
situations and examples. Here we develop the more general theory that will in
particular also cover those cases. This extends the definitions and some
results on metric properties from \cite{D1, D2} (where all systems were on the
same algebra).

The outlines of the arguments in this section are similar to those papers, but
need to be adapted. The arguments in the next section, studying aspects of
Wasserstein distances relevant to the properties and structure of systems,
move further afield, however. We use the notation and conventions from Section
\ref{AfdBasKon}, though the definition of a system will be extended and refined.

In Section \ref{AfdBasKon} we only considered a simple case of the systems we
are interested in. We now give a more abstract definition of systems,
including the ``coordinates'' from Subsection \ref{OndAfdW}, and allowing more
general cases than just discrete time. We assume the minimum structure which
still allows us to define Wasserstein distances from one system to another. In
particular, we do not require semigroup properties for dynamics, which allows
for non-Markovian systems. The definition is taken from \cite[Section 3]{D2},
though generalized to positive dynamics instead of just completely positive
dynamics, for the reasons mentioned in Subsection \ref{OndAfdDef}, which
become relevant in the next section.

\begin{definition}
\label{stelsel}A \emph{system} is given by $\mathbf{A}=\left(  A,\alpha
,\mu,k\right)  $, where $\mu\in\mathfrak{F}(A)$ and $k=(k_{1},...,k_{d})$ with
$k_{1},...,k_{d}\in A$ for some $d\in\{1,2,3,...\}$, while $\alpha$ consists
of the following: Let $\Upsilon$ be any set. To each $\upsilon\in\Upsilon$
corresponds a set $Z_{\upsilon}$ and \emph{generalized dynamics}
$\alpha_{\upsilon}$ on $A$, which is given by a u.p. map
\[
\alpha_{\upsilon,z}:A\rightarrow A
\]
for every $z\in Z_{\upsilon}$, such that
\[
\mu\circ\alpha_{\upsilon,z}=\mu
\]
for all $z\in Z_{\upsilon}$ and $\upsilon\in\Upsilon$. We then write
\[
\alpha=\left(  \alpha_{\upsilon}\right)  _{\upsilon\in\Upsilon}.
\]
\end{definition}

We view $Z_{\upsilon}$ as a set of ``points in time'' in an abstract sense.
Each $\alpha_{\upsilon}$ is viewed as dynamics, so $\mathbf{A}$ is really a
set of dynamical systems on $A$, one for each $\upsilon$, and all leaving the
state $\mu$ invariant. Lastly, $k$ serves as ``coordinates'' for the system,
in terms of which a quadratic cost of transport from one system to another
will be defined. We also define $\alpha^{\prime}$ and $\alpha^{\sigma}$
through
\[
\alpha_{\upsilon,z}^{\prime}=(\alpha_{\upsilon,z})^{\prime}\text{ \ \ and
\ \ }\alpha_{\upsilon,z}^{\sigma}=(\alpha_{\upsilon,z})^{\sigma}.
\]

In terms of Definition \ref{KMS-duaal} we have the following type of dual of a
system, which will play a key role in one of the types of Wasserstein
distances to be defined shortly.

\begin{definition}
\label{stelselKMS-duaal}The \emph{KMS-dual} of $\mathbf{A}$ is defined to be
$\mathbf{A}^{\sigma}:=(A,\alpha^{\sigma},\mu,k)$.
\end{definition}

Note that $\mathbf{A}^{\sigma\sigma}=\mathbf{A}$. Extensions of the duals of
$\mathbf{A}$ in Subsection \ref{OndAfdDef} and Section \ref{AfdFB&KMS} will be
defined in the next section to study corresponding symmetries of Wasserstein distances.

For the remainder of this section and the next, we fix $\Upsilon$, the
$Z_{\upsilon}$'s, and $d$, i.e., they are the same for all systems our Wasserstein distances will be applied to. The
following notational convention will be used: $\mathbf{A}$ will denote
$\left(  A,\alpha,\mu,k\right)  $, as in the definition above, and similarly
we write
\[
\mathbf{B}=\left(  B,\beta,\nu,l\right)  \text{ \ \ and \ \ }\mathbf{C}%
=\left(  C,\gamma,\xi,m\right)  .
\]
Let
\[
X
\]
denote any set of such systems. Our Wasserstein distances will be defined on
$X$.

Notationally it will be convenient to allow systems with one point $\Upsilon$ and without
coordinates (the case $d=0$) as well, written as $\left(  A,\alpha,\mu\right)
$, though they will not be viewed as members of $X$. This is needed to
formulate certain conditions related to transport plans, involving modular
dynamics, more conveniently in the definition below.

We are going to define Wasserstein distances from one system $\mathbf{A}$ to
another $\mathbf{B}$, as the square root of the infimum of the cost over an
appropriate set of transport plans from $\mathbf{A}$ to $\mathbf{B}$. One
option is the full set $T(\mathbf{A},\mathbf{B})$ of transport plans as
before, but there are further refinements that can be made to the set of
transport plans, that can for example ensure symmetry of the resulting
Wasserstein distance.

The sets of transport plans relevant to us are defined as follows, refining
\cite[Definition 3.5]{D2} with a corresponding change in notation and terminology.

\begin{definition}
\label{T(A,B)}We write
\[
\mathbf{A}\omega\mathbf{B}%
\]
to express that $\omega\in T(\mu,\nu)$ (see Subsection \ref{OndAfdOP}) and
\begin{equation}
\omega(\alpha_{\upsilon,z}(a)\otimes b^{\prime})=\omega(a\otimes
\beta_{\upsilon,z}^{\prime}(b^{\prime})) \label{balItvKop}%
\end{equation}
for all $a\in A$, $b^{\prime}\in B^{\prime}$, $z\in Z_{\upsilon}$ and
$\upsilon\in\Upsilon$, in which case we call $\omega$ a \emph{transport plan}
from $\mathbf{A}$ to $\mathbf{B}$. The set of transport plans from
$\mathbf{A}$ to $\mathbf{B}$ is written
\[
T(\mathbf{A},\mathbf{B}):=\left\{  \omega\in T(\mu,\nu):\mathbf{A}%
\omega\mathbf{B}\right\}  .
\]
The set of \emph{modular} transport plans from $\mathbf{A}$ to $\mathbf{B}$
is
\[
T_{\sigma}(\mathbf{A},\mathbf{B}):=\left\{  \omega\in T(\mathbf{A}%
,\mathbf{B}):(A,\sigma^{\mu},\mu)\omega(B,\sigma^{\nu},\nu)\right\}  .
\]
where $\sigma^{\mu}$ and $\sigma^{\nu}$ denote the modular dynamics (from
Tomita-Takesaki theory) associated to $\mu$ and $\nu$ respectively. The set of
\emph{KMS} transport plans from $\mathbf{A}$ to $\mathbf{B}$ is%
\[
T_{\sigma\sigma}(\mathbf{A},\mathbf{B}):=\left\{  \omega\in T_{\sigma
}(\mathbf{A},\mathbf{B}):\mathbf{A}^{\sigma}\omega\mathbf{B}^{\sigma}\right\}
.
\]
\end{definition}

In connection to $T_{\sigma}(\mathbf{A},\mathbf{B})$ we point out that
$(\sigma_{t}^{\nu})^{\prime}=\sigma_{t}^{\nu^{\prime}}$.
Note that the transport plans do not depend on the coordinates $k$ and $l$.
Furthermore, we always have $\mu\odot\nu^{\prime}\in T_{\sigma\sigma
}(\mathbf{A},\mathbf{B})$.

\begin{example}
\label{EindDimModGroep}Returning to our special case in Example
\ref{VbEindDimStandVorm}, the modular dynamics (or modular group) $\sigma
^{\mu}$ is given by%
\[
\sigma_{t}^{\mu}(\pi(a))=\pi(\rho_{\mu}^{it}a\rho_{\mu}^{-it})
\]
for all $a\in M_{n}$. Along with the modular conjugation, the modular group is
a natural component of the Tomita-Takesaki (or modular) theory of von Neumann
algebras; see \cite{BR1, T2} for expositions of the general theory.
\end{example}

Next we need to formulate the cost of a transport plan. In Subsection
\ref{OndAfdOP} we mentioned that a transport plan $\omega\in T(\mu,\nu)$ from
$\mu$ to $\nu$ can be represented as a channel (u.c.p. map) $E_{\omega
}:A\rightarrow B$. To see this, we first define a state $\delta_{\nu}$ on
$B\odot B^{\prime}$ by requiring that%
\[
\delta_{\nu}(b\otimes b^{\prime})=\left\langle \Lambda_{\nu},bb^{\prime
}\Lambda_{\nu}\right\rangle
\]
for all $b\in B$ and $b^{\prime}\in B^{\prime}$.

\begin{definition}
\label{KanVsToest}For any $\omega\in T(\mu,\nu)$, the channel $E_{\omega}$ is
defined as the unique function from $A$ to $B$ such that%
\begin{equation}
\omega(a\otimes b^{\prime})=\delta_{\nu}(E_{\omega}(a)\otimes b^{\prime}).
\label{omegaItv_delta&Eomega}%
\end{equation}
\end{definition}

This provides a correspondence between elements of $T(\mu,\nu)$ and channels
$E:A\rightarrow B$ satisfying $\nu\circ E=\mu$. The map $E_{\omega}$ in this
condition necessarily exists and is indeed forced to be u.c.p. All this is
demonstrated in \cite[Section 3]{DS2}. The state $\delta_{\nu}$ generalizes
the so-called diagonal state associated to a probability measure, but can also
be viewed as an abstract version of the maximally entangled state of a quantum
state with itself. In this way Definition \ref{KanVsToest} provides a
generalization of the (finite dimensional) Choi-Jamio{\l}kowski duality
between channels and states as used in quantum physics \cite{Choi, J, GKM}, to
a von Neumann algebra framework.

For our purposes, $E_{\omega}$ has two uses. Firstly, we have $\mathbf{A}%
\omega\mathbf{B}$ if and only if $\omega\in T(\mu,\nu)$ and%
\begin{equation}
E_{\omega}\circ\alpha_{\upsilon,z}=\beta_{\upsilon,z}\circ E_{\omega}
\label{BalE}%
\end{equation}
for all $z\in Z_{\upsilon}$ and $\upsilon\in\Upsilon$, by \cite[Theorem
4.1]{DS2}, which we also write simply as $E_{\omega}\circ\alpha=\beta\circ
E_{\omega}$. This is often a convenient way to verify $\mathbf{A}%
\omega\mathbf{B}$.

Secondly, $E_{\omega}$ allows us to define transport cost while avoiding the
complications mention in Subsection \ref{OndAfdW}, due to unbounded $S_{\nu}$:

\begin{definition}
\label{I}The cost of a transport plan $\omega\in T(\mathbf{A},\mathbf{B})$ is
defined as%
\begin{equation}
I_{\mathbf{A},\mathbf{B}}(\omega)=\sum_{i=1}^{d}\left[  \mu(k_{i}^{\ast}%
k_{i})+\nu(l_{i}^{\ast}l_{i})-\nu(E_{\omega}(k_{i})^{\ast}l_{i})-\nu
(l_{i}^{\ast}E_{\omega}(k_{i}))\right]  . \label{algKoste}%
\end{equation}
\end{definition}

In general $S_{\nu}b^{\ast}S_{\nu}$ is affiliated with $B^{\prime}$, but it
may be unbounded, hence not contained in $B^{\prime}$. For cases where we do
have $S_{\nu}b^{\ast}S_{\nu}\in B^{\prime}$, the form of the cost above is in
fact equivalent to the form used in (\ref{eindDimKoste}).
But unlike the latter, (\ref{algKoste}) only refers to bounded operators, thus
avoiding complications even when some $S_{\nu}l_{i}^{\ast}S_{\nu}$'s are unbounded.

\begin{remark}
\label{Ialt}In terms of representation machinery built and used in
\cite[Section 2 and 3]{D1}, but which will not be reviewed in any detail here,
one can write
\[
I_{\mathbf{A},\mathbf{B}}(\omega)=\left\|  \pi_{\mu}^{\omega}(k)\Omega
-\pi_{\nu}^{\omega}(l)\Omega\right\|  _{\oplus\omega}^{2},
\]
in terms of cyclic representations $(H_{\mu}^{\omega},\pi_{\mu}^{\omega
},\Omega)$ and $(H_{\nu}^{\omega},\pi_{\nu}^{\omega},\Omega)$ of $\left(
A,\mu\right)  $ and $\left(  B,\nu\right)  $ inherited from a cyclic
representation $(H_{\omega},\pi_{\omega},\Omega)$ of $(A\odot B^{\prime
},\omega)$, and where $\pi_{\mu}^{\omega}(k)\Omega\equiv\left(  \pi_{\mu
}^{\omega}(k_{1})\Omega,...,\pi_{\mu}^{\omega}(k_{d})\Omega\right)
\in\bigoplus_{l=1}^{n}H_{\omega}$, etc. This starts to give a more intuitive
idea of why we expect to obtain distances from this cost. Indeed, the triangle
inequality of our Wasserstein distances below follow from this form of
$I_{\mathbf{A},\mathbf{B}}(\omega)$. On the other hand, (\ref{algKoste}) is
useful in relation to symmetry, as well as the consequences of zero
Wasserstein distance in Section \ref{AfdWEiensk}.
\end{remark}

With this transport cost in hand, we can define our Wasserstein distances from
one system to another. The cost $I_{\mathbf{A},\mathbf{B}}(\omega)$ does not
depend directly on $\alpha$ and $\beta$, but they play a central role in
determining the allowed transport plans, namely $T(\mathbf{A},\mathbf{B})$,
$T_{\sigma}(\mathbf{A},\mathbf{B})$ or $T_{\sigma\sigma}(\mathbf{A}%
,\mathbf{B})$, and enter the definition of Wasserstein distance via this.

\begin{definition}
\label{W2}Given a set $X$ of systems as above, we define the \emph{Wasserstein
distance} $W$\ on $X$ by
\[
W(\mathbf{A},\mathbf{B}):=\inf_{\omega\in T(\mathbf{A},\mathbf{B}%
)}I_{\mathbf{A},\mathbf{B}}(\omega)^{1/2},
\]
the \emph{modular Wasserstein distance} $W_{\sigma}$\ on $X$ by
\[
W_{\sigma}(\mathbf{A},\mathbf{B}):=\inf_{\omega\in T_{\sigma}(\mathbf{A}%
,\mathbf{B})}I_{\mathbf{A},\mathbf{B}}(\omega)^{1/2},
\]
and the \emph{KMS Wasserstein distance} $W_{\sigma}$\ on $X$ by\emph{ }
\[
W_{\sigma\sigma}(\mathbf{A},\mathbf{B}):=\inf_{\omega\in T_{\sigma\sigma
}(\mathbf{A},\mathbf{B})}I_{\mathbf{A},\mathbf{B}}(\omega)^{1/2},
\]
for all $\mathbf{A},\mathbf{B}\in X$, in terms of Definitions \ref{T(A,B)} and
\ref{I}.
\end{definition}

Note that we thus obviously have%
\[
W(\mathbf{A},\mathbf{B})\leq W_{\sigma}(\mathbf{A},\mathbf{B})\leq
W_{\sigma\sigma}(\mathbf{A},\mathbf{B})
\]
for all $\mathbf{A},\mathbf{B}\in X$.

We proceed to prove the existence of optimal transport plans and the main
metric properties of the functions $W$, $W_{\sigma}$ and $W_{\sigma\sigma}$.
The proofs are similar to those appearing in \cite{D1} and \cite[Section
3]{D2} for systems on the same von Neumann algebra. We simply show how to
adapt them. Much of the framework of \cite{D1} was in fact already set up in
terms of multiple von Neumann algebras, although the case of distance
functions between states on different algebras was not treated there.

\begin{definition}
An \emph{optimal} transport plan for $W_{\sigma\sigma}(\mathbf{A},\mathbf{B})$
is any $\omega\in T_{\sigma\sigma}(\mathbf{A},\mathbf{B})$ such that
$I_{\mathbf{A},\mathbf{B}}(\omega)^{1/2}=W_{\sigma\sigma}(\mathbf{A}%
,\mathbf{B})$. We say that optimal transport plans for $W_{\sigma\sigma}$
\emph{always exist} if they exist for $W_{\sigma\sigma}(\mathbf{A}%
,\mathbf{B})$ for all $\mathbf{A},\mathbf{B}\in X$. Correspondingly for $W$
and $W_{\sigma}$.
\end{definition}

Recall that for any set $Y$, a real-valued function $\rho$ on $Y\times Y$ is
called an \emph{asymmetric pseudometric} if it satisfies the triangle
inequality, $\rho(x,y)\geq0$ and $\rho(x,x)=0$ for all $x,y\in X$. If in
addition $\rho(x,y)=\rho(y,x)$, we call $\rho$ a \emph{pseudometric}. These
properties of our Wasserstein distances will be proven below. The remaining
standard property, namely that $\rho(x,x)=0$ implies $x=x$, will appear in a
more refined form in the next section, and is essentially recovered in its
usual form in the case of appropriate coordinates (the complication being that
we allow systems on different algebras with different coordinate systems).

\begin{theorem}
\label{metries}The functions $W$ and $W_{\sigma}$ are asymmetric
pseudometrics, while $W_{\sigma\sigma}$ is a pseudometric. In addition,
optimal transport plans always exist for $W$, $W_{\sigma}$ and $W_{\sigma
\sigma}$.
\end{theorem}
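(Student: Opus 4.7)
The plan is to adapt the arguments from \cite{D1, D2} (where these results were established for systems on a common von Neumann algebra) to handle systems on possibly different algebras. I would address four items in turn: existence of optimal plans, the zero-distance-to-self property, the triangle inequality, and symmetry of $W_{\sigma\sigma}$. For \textbf{existence}, each of $T(\mathbf{A},\mathbf{B})$, $T_{\sigma}(\mathbf{A},\mathbf{B})$ and $T_{\sigma\sigma}(\mathbf{A},\mathbf{B})$ is cut out of the state space of $A\odot B^{\prime}$ by weak$^{*}$-continuous linear conditions (from (\ref{kop}) and (\ref{balItvKop}) evaluated on fixed elements), hence is convex and weak$^{*}$-compact. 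The correspondence $\omega\mapsto E_{\omega}$ from Definition \ref{KanVsToest} is weak$^{*}$-to-pointwise weak$^{*}$ continuous, so each summand of (\ref{algKoste}) is a weak$^{*}$-continuous linear functional of $\omega$, and $I_{\mathbf{A},\mathbf{B}}$ attains its infimum. For $W_{\star}(\mathbf{A},\mathbf{A})=0$ (with $\star\in\{\emptyset,\sigma,\sigma\sigma\}$), I would take the maximally entangled plan $\delta_{\mu}$ from Example \ref{VbWvirMn}: by construction $E_{\delta_{\mu}}=\id_{A}$, so the intertwining (\ref{BalE}) is trivial for $\alpha$, $\alpha^{\sigma}$ and $\sigma^{\mu}$ against themselves, placing $\delta_{\mu}\in T_{\sigma\sigma}(\mathbf{A},\mathbf{A})$, and (\ref{algKoste}) with $E=\id$ collapses to $0$.

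The \textbf{triangle inequality} will be the main obstacle, and here the Hilbert-space form of the cost in Remark \ref{Ialt} is essential. Given $\omega\in T_{\star}(\mathbf{A},\mathbf{B})$ and $\psi\in T_{\star}(\mathbf{B},\mathbf{C})$, I would construct a ``gluing'' state $\chi$ on (the relevant subalgebra of) $A\odot B^{\prime}\odot B\odot C^{\prime}$ whose $A\odot B^{\prime}$- and $B\odot C^{\prime}$-marginals recover $\omega$ and $\psi$ respectively and whose $B^{\prime}\odot B$-piece is the diagonal state $\delta_{\nu}$; this is the noncommutative analogue of gluing couplings along their shared marginal, implemented at the bimodule level as in \cite{D1}. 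Passing to a common cyclic representation, the three tuples $\pi(k)\Omega$, $\pi(l)\Omega$, $\pi(m)\Omega$ sit inside a single Hilbert direct sum of $d$ copies of $H_{\chi}$, and Minkowski's inequality there bounds $\|\pi(k)\Omega-\pi(m)\Omega\|$ by the sum of the other two distances. The $A\odot C^{\prime}$-marginal $\chi_{AC}$ has channel $E_{\chi_{AC}}=E_{\psi}\circ E_{\omega}$, so composing (\ref{BalE}) for $\omega$ and $\psi$ yields $\chi_{AC}\in T_{\star}(\mathbf{A},\mathbf{C})$ in each of the three cases (the same composition argument handles $\alpha^{\sigma},\beta^{\sigma},\gamma^{\sigma}$ and $\sigma^{\mu},\sigma^{\nu},\sigma^{\xi}$). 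Taking infima gives the triangle inequality for $W$, $W_{\sigma}$ and $W_{\sigma\sigma}$ simultaneously.

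For \textbf{symmetry of $W_{\sigma\sigma}$}, I would exhibit a cost-preserving involution $\omega\mapsto\omega^{\sharp}$ between $T_{\sigma\sigma}(\mathbf{A},\mathbf{B})$ and $T_{\sigma\sigma}(\mathbf{B},\mathbf{A})$ defined by $E_{\omega^{\sharp}}:=E_{\omega}^{\sigma}:B\to A$ (which is u.c.p.\ with $\mu\circ E_{\omega}^{\sigma}=\nu$, hence corresponds to some $\omega^{\sharp}\in T(\nu,\mu)$). Applying the KMS-dual to $E_{\omega}\circ\alpha=\beta\circ E_{\omega}$ (with $(F\circ G)^{\sigma}=G^{\sigma}\circ F^{\sigma}$) gives $\alpha^{\sigma}\circ E_{\omega}^{\sigma}=E_{\omega}^{\sigma}\circ\beta^{\sigma}$, and applying it to $E_{\omega}\circ\alpha^{\sigma}=\beta^{\sigma}\circ E_{\omega}$ gives $\alpha\circ E_{\omega}^{\sigma}=E_{\omega}^{\sigma}\circ\beta$; the analogous check for modular dynamics (using that $\sigma^{\mu^{\prime}}$ is the dual of $\sigma^{\mu}$) shows $\omega^{\sharp}\in T_{\sigma\sigma}(\mathbf{B},\mathbf{A})$. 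Finally, the defining identity of the KMS-dual yields $\nu(E_{\omega}(k_{i})^{\ast}l_{i})=\mu(k_{i}^{\ast}E_{\omega}^{\sigma}(l_{i}))$, so (\ref{algKoste}) becomes manifestly invariant under $(\mathbf{A},\mathbf{B},\omega)\leftrightarrow(\mathbf{B},\mathbf{A},\omega^{\sharp})$, giving $I_{\mathbf{A},\mathbf{B}}(\omega)=I_{\mathbf{B},\mathbf{A}}(\omega^{\sharp})$ and hence $W_{\sigma\sigma}(\mathbf{A},\mathbf{B})=W_{\sigma\sigma}(\mathbf{B},\mathbf{A})$ upon taking infima.
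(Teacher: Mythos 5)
Your proposal is correct and follows essentially the same route as the paper: weak\textsuperscript{*}-compactness of the transport-plan sets together with (uniform) approximation of $l_{i}\Lambda_{\nu}$ by vectors $b^{\prime\ast}\Lambda_{\nu}$, $b^{\prime}\in B^{\prime}$, for the existence of optimal plans; the diagonal state $\delta_{\mu}$ with $E_{\delta_{\mu}}=\id_{A}$ for zero self-distance; composition of channels $E_{\omega\circ\psi}=E_{\psi}\circ E_{\omega}$ combined with the relative tensor product (your ``gluing'') and Remark \ref{Ialt} for the triangle inequality; and $\omega\mapsto\omega^{\sigma}$ for the symmetry of $W_{\sigma\sigma}$. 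The one point to state more carefully is that the identity $\nu(E_{\omega}(k_{i})^{\ast}l_{i})=\mu(k_{i}^{\ast}E_{\omega}^{\sigma}(l_{i}))$ is \emph{not} a formal consequence of Definition \ref{KMS-duaal}: it requires $E_{\omega}\circ\sigma_{t}^{\mu}=\sigma_{t}^{\nu}\circ E_{\omega}$, i.e.\ $\omega\in T_{\sigma}(\mathbf{A},\mathbf{B})$ (available here because $\omega\in T_{\sigma\sigma}(\mathbf{A},\mathbf{B})$), which is exactly why this symmetry argument does not extend to $W$.
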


\begin{proof}
We prove it for $W_{\sigma\sigma}$. The other cases are similar but simpler.
By its definition, $W_{\sigma\sigma}$ is real-valued and never negative, since
by Kadison's inequality $E_{\omega}(a^{\ast}a)\geq E_{\omega}(a)^{\ast
}E_{\omega}(a)$ we have
\begin{align}
&  \mu(a^{\ast}a)+\nu(b^{\ast}b)-\nu(E_{\omega}(a)^{\ast}b)-\nu(b^{\ast
}E_{\omega}(a))\label{Kadison}\\
&  =\nu\left(  \left|  b-E_{\omega}(a)\right|  ^{2}\right)  +\nu\left(
E_{\omega}(\left|  a\right|  ^{2})-\left|  E_{\omega}(a)\right|  ^{2}\right)
\nonumber\\
&  \geq\nu\left(  \left|  b-E_{\omega}(a)\right|  ^{2}\right) \nonumber\\
&  \geq0\nonumber
\end{align}
for all $a\in A$ and $b\in B$. (Alternatively, one can use Remark \ref{Ialt},
but (\ref{Kadison}) will come up again in the next section.) Note that for
$\omega=\delta_{\mu}$ we have $E_{\omega}=\operatorname{id}_{A}$, from which
$\omega\in T_{\sigma\sigma}(\mathbf{A},\mathbf{A})$ follows trivially by the
test in (\ref{BalE}). Consequently $W_{\sigma\sigma}(\mathbf{A},\mathbf{A}%
)=I_{\mathbf{A},\mathbf{A}}(\delta_{\mu})=0$ from (\ref{algKoste}).

\emph{The triangle inequality.} For $\omega\in T_{\sigma\sigma}(\mathbf{A}%
,\mathbf{B})$ and $\psi\in T_{\sigma\sigma}(\mathbf{B},\mathbf{C})$, we define
$\varphi=\omega\circ\psi$ through
\[
E_{\omega\circ\psi}=E_{\psi}\circ E_{\omega}.
\]
Then $\varphi\in T_{\sigma\sigma}(\mathbf{A},\mathbf{C})$, since $E_{\varphi
}\circ\alpha_{\upsilon,z}=\gamma_{\upsilon,z}\circ E_{\varphi}$, hence
$\mathbf{A}\varphi\mathbf{C}$, and similarly $\mathbf{A}^{\sigma}%
\varphi\mathbf{C}^{\sigma}$ and $(A,\sigma^{\mu},\mu)\varphi\left(
B,\sigma^{\xi},\xi\right)  $. As in the proof of \cite[Proposition 4.3]{D1},
but allowing multiple algebras as well as $k$, $l$ and $m$ (rather than just
$k$), and using Remark \ref{Ialt}, while skipping some details explained in
\cite{D1}, we have
\begin{align*}
I_{\mathbf{A},\mathbf{C}}(\varphi)^{1/2}  
&  =
\left\|  
\pi_{\mu}^{\varphi}(k)\Phi-\pi_{\xi}^{\varphi}(m)\Phi
\right\|  _{\oplus\varphi}\\
&  \leq
\left\|  
\pi_{\mu}^{\omega}(k)\Omega-\pi_{\nu}^{\omega}(l)\Omega
\right\|  _{\oplus\omega}
+
\left\|  
\pi_{\nu}^{\psi}(l)\Psi-\pi_{\xi}^{\psi}(m)\Psi
\right\|  _{\oplus\psi}\\
&  =
I_{\mathbf{A},\mathbf{B}}(\omega)^{1/2}
+
I_{\mathbf{B},\mathbf{C}}(\psi)^{1/2}
\end{align*}
by employing the triangle inequality in $\bigoplus_{i=1}^{d}(H_{\omega}%
\otimes_{\nu}H_{\psi})$ as well as properties of the relative tensor product
$H_{\omega}\otimes_{\nu}H_{\psi}$ of the $A$-$B$-bimodule $H_{\omega}$ and
$B$-$C$-bimodule $H_{\psi}$. The role of $H_{\omega}\otimes_{\nu}H_{\psi}$
(see \cite[Appendix V.B]{Con94}, \cite{Fal}, \cite{Sa} and \cite[Section
IX.3]{T2} for background) is to ensure that the ``middle term'' coming from
the difference between $\pi_{\nu}^{\omega}(l)\Omega$ and $\pi_{\nu}^{\psi
}(l)\Psi$ via imbedding into $\bigoplus_{i=1}^{d}(H_{\omega}\otimes_{\nu
}H_{\psi})$, indeed disappears. Now take the infimum on the left over all
$\varphi\in T_{\sigma\sigma}(\mathbf{A},\mathbf{C})$, which includes the
compositions $\omega\circ\psi$ for all $\omega\in T_{\sigma\sigma}%
(\mathbf{A},\mathbf{B})$ and $\psi\in T_{\sigma\sigma}(\mathbf{B},\mathbf{C}%
)$, followed in turn by the infima over all $\omega\in T_{\sigma\sigma
}(\mathbf{A},\mathbf{B})$ and $\psi\in T_{\sigma\sigma}(\mathbf{B}%
,\mathbf{C})$ on the right.

\emph{Symmetry.} This is essentially verbatim as in the proof of \cite[Theorem
3.9]{D2}, but in terms of $E_{\omega}:A\rightarrow B$, instead of involving
just one von Neumann algebra. However, note that the proof does not hold for
$W$ and $W_{\sigma}$, just for $W_{\sigma\sigma}$, as the KMS duals play a key
role here.

\emph{Optimal transport plans exist.}
Without loss (see for example \cite[Proposition 4.1]{D(B2)}) we can view each
element of $T_{\sigma\sigma}(\mathbf{A},\mathbf{B})$ as a state on the maximal
C*-tensor product $A\otimes_{\text{max}}B^{\prime}$. Consequently
$T_{\sigma\sigma}(\mathbf{A},\mathbf{B})$ is weakly* compact.
From $W_{\sigma\sigma}$'s definition there is a sequence $\omega_{q}\in
T_{\sigma\sigma}(\mathbf{A},\mathbf{B})$ such that $I_{\mathbf{A},\mathbf{B}%
}(\omega_{q})^{1/2}\rightarrow W_{\sigma\sigma}(\mathbf{A},\mathbf{B})$, and
necessarily having a weak* cluster point $\omega\in T_{\sigma\sigma
}(\mathbf{A},\mathbf{B})$. As a result, $\omega$ is an optimal transport plan
for $W_{\sigma\sigma}(\mathbf{A},\mathbf{B})$, by the following approximation:

Given $\varepsilon>0$, there is a $q_{0}$ such that
\[
\left|  I_{\mathbf{A},\mathbf{B}}(\omega_{q})-W_{\sigma\sigma}(\mathbf{A}%
,\mathbf{B})^{2}\right|  <\varepsilon
\]
for all $q>q_{0}$. Furthermore, there exist $b_{1}^{\prime},...,b_{d}^{\prime
}\in B^{\prime}$ such that%
\[
\left\|  l_{i}\Lambda_{\nu}-b_{i}^{\prime\ast}\Lambda_{\nu}\right\|
<\varepsilon
\]
for $i=1,...,d$. In addition there is a $q>q_{0}$ such that
\[
\left|  \omega_{q}(k_{i}\otimes b_{i}^{\prime})-\omega(k_{i}\otimes
b_{i}^{\prime})\right|  <\varepsilon
\]
for $i=1,...,d$. Then as in the proof of \cite[Lemma 6.2]{D1}, using
(\ref{algKoste}) and (\ref{omegaItv_delta&Eomega}), we find
\begin{align*}
\left|  
I_{\mathbf{A},\mathbf{B}}(\omega)
-
I_{\mathbf{A},\mathbf{B}}(\omega_{q})
\right|   
&  \leq
2\sum_{i=1}^{d}
\left|  
\nu(l_{i}^{\ast}E_{\omega_{q}}(k_{i}))
-
\nu(l_{i}^{\ast}E_{\omega}(k_{i}))
\right| \\
&  =
2\sum_{i=1}^{d}
\left|  
\left\langle 
l_{i}\Lambda_{\nu},
(E_{\omega_{q}}(k_{i}) - E_{\omega}(k_{i}))\Lambda_{\nu}
\right\rangle 
\right| \\
&  <
4\varepsilon\sum_{i=1}^{d}\left\|  k_{i} \right\|  + 2d\varepsilon.
\end{align*}
Thus
\[
\left|  
I_{\mathbf{A},\mathbf{B}}(\omega)
-
W_{\sigma\sigma}(\mathbf{A},\mathbf{B})^{2}
\right|  
<
4\varepsilon\sum_{i=1}^{d}\left\|  k_{i} \right\|
+
2d\varepsilon + \varepsilon
\]
for all $\varepsilon>0$.
\end{proof}

The remaining metric issue, namely under which conditions zero Wasserstein
distance between $\mathbf{A}$ and $\mathbf{B}$ implies that they are in fact
the same system (at least up to isomorphism), will be returned to in Corollary
\ref{MetrieseGetrouheid}.

We note that optimal transport plans in general need not be unique.

\begin{example}
In (\ref{algKoste}), take $A=B$, $\mu=\nu$ and $k_{i}=l_{i}$ for $i=1,...,d$.
Assume that the dynamics of $\mathbf{A}$ is trivial (i.e., there is none, or
each is taken as the identity map). Let $F$ be any von Neumann subalgebra of
$A$ containing $\{k_{1},...,k_{d}\}$ such that $\sigma_{t}^{\mu}(F)=F$.
Consider the unique conditional expectation $E_{F}:A\rightarrow F$ such that
$\mu\circ E_{F}=\mu$; see \cite{T72} and \cite[Theorem IX.4.2]{T2}. Then
$\omega_{F}$ defined from $E_{F}$ through (\ref{omegaItv_delta&Eomega})
trivially delivers $I_{\mathbf{A},\mathbf{A}}(\omega_{F})=0$ simply because
$E_{F}(k_{i})=k_{i}$. This is true for any such $F$, hence in general multiple
transport plans give the optimal cost $0$. A simple instance is where $\mu$ is
tracial, in particular when $A$ is abelian, in which case $\sigma^{\mu}$ is
trivial and the condition $\sigma_{t}^{\mu}(F)=F$ is automatically satisfied.
\end{example}

What remains is to study relevant properties of these Wasserstein distances
and take initial steps in showing how the general theory of these distances
can be used to study the structure and properties of systems. This is what we
turn to next.

\section{Properties and implications of Wasserstein
distances\label{AfdWEiensk}}

Here we derive properties of Wasserstein distances relevant to the structure
and characteristics of systems. These properties include symmetries related to
dualities of systems. The symmetries along with the metric properties, in
particular the triangle inequality, allow us to easily find simple bounds on
how far a (quantum) system is from detailed balance, in terms of its
Wasserstein distance from other systems satisfying detailed balance
(Subsection \ref{OndAfdAfwVanFB}). The idea is that these other systems will
typically be chosen as well understood or simpler, even classical, systems.
This opens the door to analyzing a system via simpler systems, in line with
the examples in Sections \ref{AfdKlasVb} and \ref{AfdKwantVb}. In addition
(Subsection \ref{OndAfdW&GemStruk}) we also study implications of Wasserstein
distances for the structure of systems, focussing in particular on the case of
zero Wasserstein distance. The goal is to show that Wasserstein distance can
give us information about common structure in two systems.

We continue with the notation from the previous section, again fixing
$\Upsilon$, the $Z_{\upsilon}$'s, and $d$ for all systems to be considered.

\subsection{Bounds on deviation from detailed balance\label{OndAfdAfwVanFB}}

The first step is to define two more duals of systems, in addition to the
KMS-dual from Definition \ref{stelselKMS-duaal}. The first of these extends
the dual defined in Subsection \ref{OndAfdDef}:

\begin{definition}
The \emph{dual} of $\mathbf{A}$ is defined to be $\mathbf{A}^{\prime
}:=(A^{\prime},\alpha^{\prime},\mu^{\prime},k^{\prime})$ where $k^{\prime
}=(k_{1}^{\prime},...,k_{d}^{\prime}):=(j_{A}(k_{1}^{\ast}),...,j_{A}%
(k_{d}^{\ast}))$.
\end{definition}

It is easily seen that $\mathbf{A}^{\prime\prime}=\mathbf{A}$. For systems
$\mathbf{A}$ and $\mathbf{B}$, and $\omega\in T(\mu,\nu)$, a useful fact is%
\begin{equation}
\mathbf{A}\omega\mathbf{B}\Leftrightarrow\mathbf{B}^{\prime}\mathbb{\omega
}^{\prime}\mathbf{A}^{\prime}\Leftrightarrow\mathbf{B}^{\sigma}\mathbb{\omega
}^{\sigma}\mathbf{A}^{\sigma}, \label{duaalEkw}
\end{equation}
which was shown in \cite[Corollary 4.6]{DS2}, where $\omega^{\prime}$ and
$\omega^{\sigma}$ are respectively defined via
\[
\omega'(b' \otimes a)
=
\delta_{\mu'}(E'_{\omega}(b') \otimes a)
=
\delta_{\nu}(E_{\omega}(a) \otimes b')
=
\omega(a\otimes b')
\]
for all $a\in A$, $b'\in B'$, and
\[
\omega^{\sigma}(b\otimes a')
=
\delta_{\mu}(E_{\omega}^{\sigma}(b)\otimes a'),
\]
for all $a^{\prime}\in A^{\prime}$, $b\in B$, in terms of the correspondence
in Definition \ref{KanVsToest}. I.e., $E_{\omega^{\prime}}=E_{\omega}^{\prime
}$ and $E_{\omega^{\sigma}}=E_{\omega}^{\sigma}$.

The second of these duals, extending the reverse from Section \ref{AfdFB&KMS},
is defined in terms of a reversing operation as appearing in Section
\ref{AfdFB&KMS}. In fact more generally we also need appropriate reversals for
u.p. maps between von Neumann algebras, which we formalize as follows.

\begin{definition}
Given reversing operations $\theta_{\mu}$ and $\theta_{\nu}$ for $(A,\mu)$ and
$(B,\nu)$ respectively, as well as a u.p. map $E:A\rightarrow B$ satisfying
$\nu\circ E=\mu$, we define its $(\theta_{\mu},\theta_{\nu})$\emph{-KMS dual}
$E^{\leftarrow}:B\rightarrow A$ by%
\[
E^{\leftarrow}:=\theta_{\mu}\circ E^{\sigma}\circ\theta_{\nu}.
\]
For any $\omega\in T(\mu,\nu)$ we can consequently define $\omega^{\leftarrow
}\in T(\nu,\mu)$ via
\[
\omega^{\leftarrow}(b\otimes a^{\prime})=\delta_{\mu}(E_{\omega}^{\leftarrow
}(b)\otimes a^{\prime})
\]
for all $a^{\prime}\in A^{\prime}$, $b\in B$, in terms of Definition
\ref{KanVsToest}, i.e., $E_{\omega^{\leftarrow}}=E_{\omega}^{\leftarrow}$ .
(One could write $E^{\theta_{\mu},\theta_{\nu}}$ instead of $E^{\leftarrow}$,
but $\theta_{\mu}$ and $\theta_{\nu}$ will always be clear from context.)
\end{definition}

Note that $E^{\leftarrow}=(\theta_{\nu}\circ E\circ\theta_{\mu})^{\sigma}$,
since $\theta_{\mu}^{\sigma}=\theta_{\mu}$ and $\theta_{\nu}^{\sigma}%
=\theta_{\nu}$ similar to Section \ref{AfdFB&KMS}, hence $E^{\leftarrow
\leftarrow}=E$.

\begin{definition}
We call a system $\mathbf{A}$ \emph{reversible} if there is a distinguished
$\alpha_{\upsilon,z}=\theta_{\mathbf{A}}$ for some $\upsilon\in\Upsilon$ with
$Z_{\upsilon}=\{z\}$ a 1-point set, where $\theta_{\mathbf{A}}$ is a reversing
operation for $(A,\mu)$, as defined in Section \ref{AfdFB&KMS}. The
\emph{reverse} of $\mathbf{A}$ is then defined as the system%
\[
\mathbf{A}^{\leftarrow}=(A,\alpha^{\leftarrow},\mu,k)
\]
where $\alpha_{\upsilon,z}^{\leftarrow}=(\alpha_{\upsilon,z})^{\leftarrow}$
for every $z\in Z_{\upsilon}$ and $\upsilon\in\Upsilon$. A reversible system
$\mathbf{A}$ is said to satisfy $\theta_{\mathbf{A}}$\emph{-sqdb} if
$\mathbf{A}^{\leftarrow}=\mathbf{A}$, extending the definition from Section
\ref{AfdFB&KMS}.
\end{definition}

As in Section \ref{AfdFB&KMS}, one has $\mathbf{A}^{\leftarrow\leftarrow
}=\mathbf{A}$. In this respect note that
\[
\theta_{\mathbf{A}}^{\leftarrow}=\theta_{\mathbf{A}},
\]
i.e., $\mathbf{A}^{\leftarrow}$ has the same reversing operation as
$\mathbf{A}$, since $\theta_{\mathbf{A}}^{\sigma}=\theta_{\mathbf{A}}$. We aim
to connect our Wasserstein distances to $\theta_{\mathbf{A}}$-sqdb.

We now have the following simple proposition in the vein of (\ref{duaalEkw}).

\begin{proposition}
\label{duaalOmkEkw}For reversible systems $\mathbf{A}$ and $\mathbf{B}$, one
has
\[
\mathbf{A}\omega\mathbf{B}\Leftrightarrow\mathbf{B}^{\leftarrow}%
\mathbb{\omega}^{\leftarrow}\mathbf{A}^{\leftarrow}.
\]
Note in addition that if $\omega\in T(\mathbf{A},\mathbf{B})$ for reversible
$\mathbf{A}$ and $\mathbf{B}$, then $\theta_{\mathbf{B}}\circ E_{\omega}%
\circ\theta_{\mathbf{A}}=E_{\omega}$, hence $E_{\omega}^{\leftarrow}%
=E_{\omega}^{\sigma}$ and $\omega^{\leftarrow}=\omega^{\sigma}$.
\end{proposition}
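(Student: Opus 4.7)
The plan is to establish the addendum first, since the main equivalence will follow from it together with the already-proven duality in (\ref{duaalEkw}). The crucial consequence of reversibility is that $\theta_{\mathbf{A}}$ sits among the $\alpha_{\upsilon,z}$ of $\mathbf{A}$ and $\theta_{\mathbf{B}}$ among the $\beta_{\upsilon,z}$ of $\mathbf{B}$, so the balance condition $\mathbf{A}\omega\mathbf{B}$ is strong enough to force $E_{\omega}$ to intertwine the two reversing operations.

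For the addendum, assuming $\omega\in T(\mathbf{A},\mathbf{B})$, I would apply the characterization (\ref{BalE}) at the distinguished $(\upsilon,z)$ where $\alpha_{\upsilon,z}=\theta_{\mathbf{A}}$ and $\beta_{\upsilon,z}=\theta_{\mathbf{B}}$, obtaining $E_{\omega}\circ\theta_{\mathbf{A}}=\theta_{\mathbf{B}}\circ E_{\omega}$. Composing on the left with $\theta_{\mathbf{B}}$ and using $\theta_{\mathbf{B}}^{2}=\operatorname{id}$ rearranges this to $\theta_{\mathbf{B}}\circ E_{\omega}\circ\theta_{\mathbf{A}}=E_{\omega}$. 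Taking the KMS-dual of this identity and applying the anti-composition rule $(F_{1}\circ F_{2}\circ F_{3})^{\sigma}=F_{3}^{\sigma}\circ F_{2}^{\sigma}\circ F_{1}^{\sigma}$ together with the self-duality $\theta^{\sigma}=\theta$ for reversing operations (noted in Section \ref{AfdFB&KMS}) yields $\theta_{\mathbf{A}}\circ E_{\omega}^{\sigma}\circ\theta_{\mathbf{B}}=E_{\omega}^{\sigma}$. The left-hand side is precisely $E_{\omega}^{\leftarrow}$ by definition, so $E_{\omega}^{\leftarrow}=E_{\omega}^{\sigma}$; the identification $\omega^{\leftarrow}=\omega^{\sigma}$ then follows directly from their defining formulas via the channel-state correspondence of Definition \ref{KanVsToest}.

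For the main equivalence, I would prove only the forward direction, since the converse is obtained by applying it to the reversible systems $\mathbf{B}^{\leftarrow}$ and $\mathbf{A}^{\leftarrow}$ (whose reversing operations remain $\theta_{\mathbf{B}}$ and $\theta_{\mathbf{A}}$ because $\theta^{\leftarrow}=\theta$) together with $\mathbf{A}^{\leftarrow\leftarrow}=\mathbf{A}$ and $\omega^{\leftarrow\leftarrow}=\omega$. So assume $\mathbf{A}\omega\mathbf{B}$. By (\ref{duaalEkw}) we have $\mathbf{B}^{\sigma}\omega^{\sigma}\mathbf{A}^{\sigma}$, which via (\ref{BalE}) reads $E_{\omega}^{\sigma}\circ\beta_{\upsilon,z}^{\sigma}=\alpha_{\upsilon,z}^{\sigma}\circ E_{\omega}^{\sigma}$. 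Substituting $\alpha^{\leftarrow}=\theta_{\mathbf{A}}\circ\alpha^{\sigma}\circ\theta_{\mathbf{A}}$ and $\beta^{\leftarrow}=\theta_{\mathbf{B}}\circ\beta^{\sigma}\circ\theta_{\mathbf{B}}$, using $E_{\omega}^{\leftarrow}=E_{\omega}^{\sigma}$, and invoking the intertwining $\theta_{\mathbf{A}}\circ E_{\omega}^{\sigma}=E_{\omega}^{\sigma}\circ\theta_{\mathbf{B}}$ (obtained from $\theta_{\mathbf{A}}\circ E_{\omega}^{\sigma}\circ\theta_{\mathbf{B}}=E_{\omega}^{\sigma}$ by composing with $\theta_{\mathbf{A}}$ on the left), the target identity $E_{\omega}^{\leftarrow}\circ\beta^{\leftarrow}=\alpha^{\leftarrow}\circ E_{\omega}^{\leftarrow}$ reduces to the one just supplied by (\ref{duaalEkw}) after the four copies of $\theta$ cancel pairwise. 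The only real obstacle here is careful bookkeeping of composition order under the KMS-dual and of where the reversing operations are inserted; there is no deeper analytic difficulty.
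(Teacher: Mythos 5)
Your proposal is correct and follows essentially the same route as the paper: both rest on the $\sigma$-dual form of the balance condition from (\ref{duaalEkw}) (equivalently $\alpha^{\sigma}\circ E_{\omega}^{\sigma}=E_{\omega}^{\sigma}\circ\beta^{\sigma}$), cancellation via $\theta_{\mathbf{A}}^{2}=\operatorname{id}_{A}$ and $\theta_{\mathbf{B}}^{2}=\operatorname{id}_{B}$, and the observation that reversibility puts the reversing operations among the dynamics so that (\ref{BalE}) forces $\theta_{\mathbf{B}}\circ E_{\omega}\circ\theta_{\mathbf{A}}=E_{\omega}$. The only cosmetic differences are that you prove the addendum first and deduce the backward implication by applying the forward one to the reversed systems, whereas the paper runs the whole thing as a single chain of equivalences that does not need the addendum; both are sound.
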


\begin{proof}
Note that $E_{\omega}\circ\alpha_{\upsilon,z}=\beta_{\upsilon,z}\circ
E_{\omega}$ is equivalent to $\alpha_{\upsilon,z}^{\sigma}\circ E_{\omega
}^{\sigma}=E_{\omega}^{\sigma}\circ\beta_{\upsilon,z}^{\sigma}$. Hence, for
reversible systems, $E_{\omega}\circ\alpha_{\upsilon,z}=\beta_{\upsilon
,z}\circ E_{\omega}$ is also equivalent to $\alpha_{\upsilon,z}^{\leftarrow
}\circ E_{\omega}^{\leftarrow}=E_{\omega}^{\leftarrow}\circ\beta_{\upsilon
,z}^{\leftarrow}$, since $\theta_{\mathbf{A}}\circ\theta_{\mathbf{A}%
}=\operatorname*{id}_{A}$ and $\theta_{\mathbf{B}}\circ\theta_{\mathbf{B}%
}=\operatorname*{id}_{B}$. The last statement in the proposition follows from
$T(\mathbf{A},\mathbf{B})$'s definition and the fact that the reversing
operations are part of the systems' dynamics.
\end{proof}

In what follows we implicitly assume that we choose our set of systems $X$
from Section \ref{AfdW} to contain all the duals that will be referred to.
Then the Wasserstein distances $W_{\sigma}$ and $W_{\sigma\sigma}$ have the
following symmetries with respect to the three types of duals discussed above.

\begin{theorem}
\label{Duaal&Sim}For any $\mathbf{A}$ and $\mathbf{B}$ we have
\[
W_{\sigma}(\mathbf{A},\mathbf{B})=W_{\sigma}(\mathbf{B}^{\prime}%
,\mathbf{A}^{\prime})=W_{\sigma}(\mathbf{B}^{\sigma},\mathbf{A}^{\sigma}).
\]
For reversible systems $\mathbf{A}$ and $\mathbf{B}$ we in addition have
\[
W_{\sigma}(\mathbf{A},\mathbf{B})=W_{\sigma}(\mathbf{B}^{\leftarrow
},\mathbf{A}^{\leftarrow}).
\]
The corresponding results also hold for $W_{\sigma\sigma}$, which can be written as 
$W_{\sigma\sigma}(\mathbf{A},\mathbf{B})
=
W_{\sigma\sigma}(\mathbf{A}',\mathbf{B}')$
etc., due to $W_{\sigma\sigma}$'s
usual pseudometric symmetry (Theorem \ref{metries}).
\end{theorem}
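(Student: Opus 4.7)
The plan is to prove each of the three equalities for $W_\sigma$ by constructing a cost-preserving bijection between $T_\sigma(\mathbf{A},\mathbf{B})$ and the corresponding set of transport plans between the dualized systems, and then taking infima on both sides. The $W_{\sigma\sigma}$ analogs will come from the same bijections, which also preserve the extra KMS condition $\mathbf{A}^\sigma\omega\mathbf{B}^\sigma$; the pseudometric symmetry of $W_{\sigma\sigma}$ from Theorem \ref{metries} will then rewrite each equality without the swap of arguments, giving the form stated in the theorem.

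For the commutant dual, the relevant map is $\omega\mapsto\omega'$. By the equivalence (\ref{duaalEkw}) this restricts to a bijection $T(\mathbf{A},\mathbf{B})\to T(\mathbf{B}',\mathbf{A}')$. The modular condition $(A,\sigma^\mu,\mu)\omega(B,\sigma^\nu,\nu)$ translates via the same equivalence to $(B',\sigma^{\nu'},\nu')\omega'(A',\sigma^{\mu'},\mu')$, using the identity $(\sigma_t^\nu)'=\sigma_t^{\nu'}$ noted after Definition \ref{T(A,B)}, so the bijection restricts further to $T_\sigma(\mathbf{A},\mathbf{B})\to T_\sigma(\mathbf{B}',\mathbf{A}')$. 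For the KMS dual, $\omega\mapsto\omega^\sigma$ plays the analogous role, again via (\ref{duaalEkw}) together with the fact that $\sigma^\mu$ is intertwined by $j_A$ in the standard way. For the reverse, Proposition \ref{duaalOmkEkw} supplies the bijection $\omega\mapsto\omega^\leftarrow$ directly, and modularity is preserved because a reversing operation on $(A,\mu)$ intertwines the modular group on $A$ in the manner required.

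Cost preservation is then checked from formula (\ref{algKoste}) term by term. For the diagonal terms, the anti-multiplicative and $*$-preserving properties of $j_A$ give $(k_i')^*k_i'=j_A(k_i)j_A(k_i^*)=j_A(k_i^*k_i)$, whence $\mu'((k_i')^*k_i')=\mu(k_i^*k_i)$, and similarly for $l_i'$. For the mixed term $\mu'((k_i')^*E_{\omega'}(l_i'))$ one invokes $E_{\omega'}=E_\omega'$ together with the defining relation of Definition \ref{DefDuaal} and the standard-form identifications $\Lambda_{\mu'}=\Lambda_\mu$, $\Lambda_{\nu'}=\Lambda_\nu$, to convert it into the inner product over $H_B$ matching $\nu(l_i^*E_\omega(k_i))$. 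A conceptually cleaner route is the vector form of Remark \ref{Ialt}: the dual operation effectively exchanges $\pi_\mu^\omega(k)\Omega$ and $\pi_\nu^\omega(l)\Omega$ in the ambient Hilbert space while preserving its norm. The cost preservation for $\omega\mapsto\omega^\sigma$ and $\omega\mapsto\omega^\leftarrow$ follows the same pattern, now using Definition \ref{KMS-duaal} in place of Definition \ref{DefDuaal}.

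For $W_{\sigma\sigma}$ the same three bijections also preserve the condition $\mathbf{A}^\sigma\omega\mathbf{B}^\sigma$, because dualization commutes in the appropriate sense (for instance $(\mathbf{B}^\sigma)'=(\mathbf{B}')^\sigma$), so (\ref{duaalEkw}) applies a second time to this auxiliary requirement. Combined with the already-established pseudometric symmetry $W_{\sigma\sigma}(\mathbf{A},\mathbf{B})=W_{\sigma\sigma}(\mathbf{B},\mathbf{A})$, the equality $W_{\sigma\sigma}(\mathbf{A},\mathbf{B})=W_{\sigma\sigma}(\mathbf{B}',\mathbf{A}')$ can then be restated as $W_{\sigma\sigma}(\mathbf{A},\mathbf{B})=W_{\sigma\sigma}(\mathbf{A}',\mathbf{B}')$, and similarly for the other two dualities. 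The main obstacle is the bookkeeping in the mixed-term calculation for cost preservation: one must carefully track which algebra each operator lives in and how the defining bilinear relation of $E_\omega'$ interacts with $j_A$ and $j_B$ when the vectors $k_i'=j_A(k_i^*)$ and $l_i'=j_B(l_i^*)$ are substituted. Direct expansion in the standard-form inner product, or equivalently the vector-form identity of Remark \ref{Ialt}, should close the computation cleanly.
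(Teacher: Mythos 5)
Your overall architecture coincides with the paper's: for each duality you set up a bijection between the relevant sets of transport plans via (\ref{duaalEkw}) (together with $(\sigma_{t}^{\nu})^{\prime}=\sigma_{t}^{\nu^{\prime}}$, $(\sigma_{t}^{\mu})^{\sigma}=\sigma_{-t}^{\mu}$ and the commutation of a reversing operation with the modular group), you check that the cost is preserved, and you take infima; the $W_{\sigma\sigma}$ statements then follow from identities such as $(\mathbf{A}^{\prime})^{\sigma}=(\mathbf{A}^{\sigma})^{\prime}$ together with the pseudometric symmetry from Theorem \ref{metries}. All of that matches the paper, and your treatment of the diagonal terms $\mu^{\prime}(k_{i}^{\prime\ast}k_{i}^{\prime})=\mu(k_{i}^{\ast}k_{i})$ is correct.

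The gap is in the cost-preservation step for the mixed terms. Unwinding $k_{i}^{\prime}=j_{A}(k_{i}^{\ast})$, $l_{i}^{\prime}=j_{B}(l_{i}^{\ast})$ and $E_{\omega^{\prime}}=E_{\omega}^{\prime}$ turns $\mu^{\prime}(k_{i}^{\prime\ast}E_{\omega^{\prime}}(l_{i}^{\prime}))$ into $\mu(E_{\omega}^{\sigma}(l_{i})^{\ast}k_{i})$, and matching this with $\nu(l_{i}^{\ast}E_{\omega}(k_{i}))$ requires the identity $\mu(aE_{\omega}^{\sigma}(b))=\nu(E_{\omega}(a)b)$. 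This is \emph{not} a consequence of Definition \ref{DefDuaal} and the identification $\Lambda_{\mu^{\prime}}=\Lambda_{\mu}$ alone: the element $k_{i}^{\prime\ast}$ lies in $A^{\prime}$, not $A$, so the defining bilinear relation of the dual does not apply directly, and one instead finds $j_{A}(k_{i}^{\ast})\Lambda_{\mu}=\Delta_{\mu}^{1/2}k_{i}^{\ast}\Lambda_{\mu}$, so the modular operator appears. Pushing $\Delta^{1/2}$ across $E_{\omega}$ is exactly where $E_{\omega}\circ\sigma_{t}^{\mu}=\sigma_{t}^{\nu}\circ E_{\omega}$ is needed, i.e., the hypothesis that $\omega$ is a \emph{modular} transport plan (this is the content of \cite[Lemma 5.2]{D1} that the paper invokes). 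The same issue is hidden in your ``cleaner route'' via Remark \ref{Ialt}: the asserted norm-preserving exchange of $\pi_{\mu}^{\omega}(k)\Omega$ and $\pi_{\nu}^{\omega}(l)\Omega$ is not automatic. As written, your cost computation would apply verbatim to an arbitrary $\omega\in T(\mathbf{A},\mathbf{B})$ and would therefore yield $W(\mathbf{A},\mathbf{B})=W(\mathbf{B}^{\prime},\mathbf{A}^{\prime})$, a symmetry the paper explicitly says is not available for $W$. You must therefore locate the use of modularity inside the cost computation itself, not only in the verification that the bijection maps $T_{\sigma}$ onto $T_{\sigma}$.
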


\begin{proof}
For $\omega$ $\in T_{\sigma}(\mathbf{A},\mathbf{B})$ (recall Definition
\ref{T(A,B)})%
\begin{align*}
I_{\mathbf{B}^{\prime},\mathbf{A}^{\prime}}(\omega^{\prime})  &  =\sum
_{i=1}^{d}\left[  \nu^{\prime}(l_{i}^{\prime}{}^{\ast}l_{i}^{\prime}%
)+\mu^{\prime}(k_{i}^{\prime}{}^{\ast}k_{i}^{\prime})-\mu^{\prime}%
(E_{\omega^{\prime}}(l_{i}^{\prime})^{\ast}k_{i}^{\prime})-\mu^{\prime}%
(k_{i}^{\prime}{}^{\ast}E_{\omega^{\prime}}(l_{i}^{\prime}))\right] \\
&  =\sum_{i=1}^{d}\left[  \nu(l_{i}^{\ast}l_{i})+\mu(k_{i}^{\ast}k_{i}%
)-\mu(k_{i}^{\ast}E_{\omega}^{\sigma}(l_{i}))-\mu(E_{\omega}^{\sigma}%
(l_{i})^{\ast}k_{i})\right] \\
&  =I_{\mathbf{A},\mathbf{B}}(\omega),
\end{align*}
using the definitions and properties of the various objects appearing, along
with $\mu(aE_{\omega}^{\sigma}(b))=\nu(E_{\omega}(a)b)$, which follows from
$E_{\omega}\circ\sigma_{t}^{\mu}=\sigma_{t}^{\nu}\circ E_{\omega}$ (see
\cite[Lemma 5.2]{D1} as well as the proof of \cite[Lemma 5.3]{D1}). Similarly
$I_{\mathbf{B}^{\sigma},\mathbf{A}^{\sigma}}(\omega^{\sigma})=I_{\mathbf{A}%
,\mathbf{B}}(\omega)$. Because of (\ref{duaalEkw}) applied to the modular
dynamics, along with $(\sigma_{t}^{\mu})^{\prime}=\sigma_{t}^{\mu^{\prime}}$
and $(\sigma_{t}^{\mu})^{\sigma}=\sigma_{-t}^{\mu}$, to verify the modular
transport plan properties, it follows that for $\omega\in T(\mathbf{A}%
,\mathbf{B})$ we have $\omega\in T_{\sigma}(\mathbf{A},\mathbf{B})$ if and
only if $\omega^{\prime}\in T_{\sigma}(\mathbf{B}^{\prime},\mathbf{A}^{\prime
})$, and if and only if $\omega^{\sigma}\in T_{\sigma}(\mathbf{B}^{\sigma
},\mathbf{A}^{\sigma})$. Consequently, from the definition of $W_{\sigma}$ we
have $W_{\sigma}(\mathbf{A},\mathbf{B})=W_{\sigma}(\mathbf{B}^{\prime
},\mathbf{A}^{\prime})=W_{\sigma}(\mathbf{B}^{\sigma},\mathbf{A}^{\sigma})$.

For reversible $\mathbf{A}$ and $\mathbf{B}$ and $\omega$ $\in T_{\sigma
}(\mathbf{A},\mathbf{B})$, one has $\mu(aE_{\omega}^{\leftarrow}%
(b))=\mu(aE_{\omega}^{\sigma}(b))=\nu(E_{\omega}(a)b)$ by Proposition
\ref{duaalOmkEkw}. Similar to above, it follows that $I_{\mathbf{B}%
^{\leftarrow},\mathbf{A}^{\leftarrow}}(\omega^{\leftarrow})=$ $I_{\mathbf{A}%
,\mathbf{B}}(\omega)$. Since $\mu\circ\theta_{\mathbf{A}}=\mu$, we have
$\sigma_{t}^{\mu}\circ\theta_{\mathbf{A}}=\theta_{\mathbf{A}}\circ\sigma
_{t}^{\mu}$ (similar to the proof of \cite[Proposition 6.4]{DS2} for example),
hence $(\sigma_{t}^{\mu})^{\leftarrow}=\sigma_{-t}^{\mu}$, which along with
Proposition \ref{duaalOmkEkw} tells us that for $\omega\in T(\mathbf{A}%
,\mathbf{B})$ we have $\omega\in T_{\sigma}(\mathbf{A},\mathbf{B})$ if and
only if $\omega^{\leftarrow}\in T_{\sigma}(\mathbf{B}^{\leftarrow}%
,\mathbf{A}^{\leftarrow})$. Thus $W_{\sigma}(\mathbf{B}^{\leftarrow
},\mathbf{A}^{\leftarrow})=W_{\sigma}(\mathbf{A},\mathbf{B})$.

The case of $\ W_{\sigma\sigma}$ is very similar. One has to verify that
$(\mathbf{A}^{\prime})^{\sigma}=(\mathbf{A}^{\sigma})^{\prime}$ and
$(\mathbf{A}^{\leftarrow})^{\sigma}=(\mathbf{A}^{\sigma})^{\leftarrow}$, which
are straightforward from the definitions and general properties of these
constructions, including $(j_{\nu}\circ E\circ j_{\mu})^{\prime}=j_{\mu}\circ
E^{\prime}\circ j_{\nu}$ (see \cite[Proposition 2.8]{DS2}). This allows us to
check that for $\omega\in T(\mathbf{A},\mathbf{B})$ and reversible
$\mathbf{A}$ and $\mathbf{B}$, one has $\omega\in T_{\sigma\sigma}%
(\mathbf{A},\mathbf{B})$ if and only if $\omega^{\leftarrow}\in T_{\sigma
\sigma}(\mathbf{B}^{\leftarrow},\mathbf{A}^{\leftarrow})$. Consequently,
$W_{\sigma\sigma}(\mathbf{B}^{\leftarrow},\mathbf{A}^{\leftarrow}%
)=W_{\sigma\sigma}(\mathbf{A},\mathbf{B})$, but by Theorem \ref{metries},
$W_{\sigma\sigma}(\mathbf{B}^{\leftarrow},\mathbf{A}^{\leftarrow}%
)=W_{\sigma\sigma}(\mathbf{A}^{\leftarrow},\mathbf{B}^{\leftarrow})$.
Similarly for the other two dualities.
\end{proof}

This result along with the basic metric properties in Theorem \ref{metries},
lead to simple consequences for quantifying and bounding the deviation of a
system from satisfying a detailed balance condition, in terms of the
Wasserstein distance of the system from another system which does satisfy some
detailed balance condition.

\begin{corollary}
\label{fbAfskat}Consider reversible systems $\mathbf{A}$ and $\mathbf{B}$. If
$\mathbf{B}$ satisfies $\theta_{\mathbf{B}}$-sqdb, then
\[
W_{\sigma}(\mathbf{A},\mathbf{A}^{\leftarrow})
\leq
2W_{\sigma}(\mathbf{A},\mathbf{B})
\text{ \ \ and \ \ }
W_{\sigma}(\mathbf{A}^{\leftarrow},\mathbf{A})
\leq
2W_{\sigma}(\mathbf{B},\mathbf{A})
\]
and
\[
W_{\sigma\sigma}(\mathbf{A},\mathbf{A}^{\leftarrow})
\leq
2W_{\sigma\sigma}(\mathbf{A},\mathbf{B}).
\]
\end{corollary}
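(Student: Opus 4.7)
The plan is to derive all three inequalities as essentially immediate consequences of the triangle inequality (from Theorem \ref{metries}) combined with the reversal symmetry for reversible systems (from Theorem \ref{Duaal&Sim}), using only the single hypothesis that $\mathbf{B}^{\leftarrow} = \mathbf{B}$.

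For the first inequality, I would begin by applying Theorem \ref{Duaal&Sim} to obtain $W_{\sigma}(\mathbf{A},\mathbf{B}) = W_{\sigma}(\mathbf{B}^{\leftarrow},\mathbf{A}^{\leftarrow})$, which by $\theta_{\mathbf{B}}$-sqdb collapses to $W_{\sigma}(\mathbf{A},\mathbf{B}) = W_{\sigma}(\mathbf{B},\mathbf{A}^{\leftarrow})$. Then the triangle inequality for the asymmetric pseudometric $W_{\sigma}$, applied through the intermediate system $\mathbf{B}$, gives
\[
W_{\sigma}(\mathbf{A},\mathbf{A}^{\leftarrow}) \leq W_{\sigma}(\mathbf{A},\mathbf{B}) + W_{\sigma}(\mathbf{B},\mathbf{A}^{\leftarrow}) = 2 W_{\sigma}(\mathbf{A},\mathbf{B}).
\]

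The second inequality follows by the same pattern but routed in the opposite direction. From Theorem \ref{Duaal&Sim} and $\mathbf{B}^{\leftarrow}=\mathbf{B}$ I obtain $W_{\sigma}(\mathbf{B},\mathbf{A}) = W_{\sigma}(\mathbf{A}^{\leftarrow},\mathbf{B}^{\leftarrow}) = W_{\sigma}(\mathbf{A}^{\leftarrow},\mathbf{B})$, after which the triangle inequality through $\mathbf{B}$ gives
\[
W_{\sigma}(\mathbf{A}^{\leftarrow},\mathbf{A}) \leq W_{\sigma}(\mathbf{A}^{\leftarrow},\mathbf{B}) + W_{\sigma}(\mathbf{B},\mathbf{A}) = 2 W_{\sigma}(\mathbf{B},\mathbf{A}).
\]

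The third inequality is handled analogously. Using the reversal symmetry for $W_{\sigma\sigma}$ from Theorem \ref{Duaal&Sim} together with the ordinary symmetry of $W_{\sigma\sigma}$ from Theorem \ref{metries}, I get $W_{\sigma\sigma}(\mathbf{A},\mathbf{B}) = W_{\sigma\sigma}(\mathbf{A}^{\leftarrow},\mathbf{B}^{\leftarrow}) = W_{\sigma\sigma}(\mathbf{A}^{\leftarrow},\mathbf{B})$, and then the triangle inequality through $\mathbf{B}$ yields the claimed bound. There is no real obstacle here; the only thing to be careful about is using the asymmetric triangle inequality in the correct order for $W_{\sigma}$ (since the two displayed inequalities require the intermediate step $\mathbf{B}$ to sit in different positions), and to observe that $\theta_{\mathbf{B}}$-sqdb is exactly the equality $\mathbf{B}^{\leftarrow}=\mathbf{B}$ which allows the symmetry identity to deliver the needed equality of distances to $\mathbf{B}$ versus to $\mathbf{B}^{\leftarrow}$.
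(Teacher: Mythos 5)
Your proof is correct and follows essentially the same route as the paper: the triangle inequality through $\mathbf{B}$ combined with the reversal symmetry $W_{\sigma}(\mathbf{A},\mathbf{B})=W_{\sigma}(\mathbf{B}^{\leftarrow},\mathbf{A}^{\leftarrow})$ from Theorem \ref{Duaal&Sim} and the hypothesis $\mathbf{B}^{\leftarrow}=\mathbf{B}$. You have in fact spelled out the details (including the correct ordering of arguments for the asymmetric case) more explicitly than the paper's terse proof does.
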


\begin{proof}
By the triangle inequality (from Theorem \ref{metries})
\[
W_{\sigma}(\mathbf{A},\mathbf{A}^{\leftarrow})
\leq
W_{\sigma}(\mathbf{A},\mathbf{B})
+
W_{\sigma}(\mathbf{B,A}^{\leftarrow})
=
2W_{\sigma}(\mathbf{A},\mathbf{B}).
\]
Similarly for the other cases.
\end{proof}

Keep in mind that $\mathbf{A}^{\leftarrow}=\mathbf{A}$ would tell us that
$\mathbf{A}$ satisfies $\theta_{\mathbf{A}}$-sqdb. This condition indeed
follows from $W_{\sigma}(\mathbf{A},\mathbf{A}^{\leftarrow})=0$ if $k$
generates $A$ and $\{k_{1}^{\ast},...,k_{d}^{\ast}\}=\{k_{1},...,k_{d}\}$, as
can be seen from \cite[Theorems 3.9 and 3.10]{D2} and \cite[Section 6]{D1} as
well as the arguments used there (those papers do not explicitly cover
$W_{\sigma}$ as defined in this paper). Similarly for the other cases in the
corollary above. For $W_{\sigma}$ and $W_{\sigma\sigma}$ we in fact discuss
this from a more general point of view in Subsection \ref{OndAfdW&GemStruk}
(Theorem \ref{gemeneStruktuur} and Corollary \ref{MetrieseGetrouheid}).

Since we do not have corresponding symmetries for $W$ (from the proof of
Theorem \ref{Duaal&Sim} it is clear that the modular dynamics and KMS duals
played a key role in obtaining those symmetries, and $W$ simply does not have
the required structure built in) we only have the triangle inequality%
\[
W(\mathbf{A},\mathbf{A}^{\leftarrow})\leq W(\mathbf{A},\mathbf{B}%
)+W(\mathbf{B},\mathbf{A}^{\leftarrow})
\]
and similarly for $W(\mathbf{A}^{\leftarrow},\mathbf{A})$. However, the
central point remains the same, namely to be able to bound the deviation of
$\mathbf{A}$ from detailed balance in terms of its Wasserstein distance from
another system.

\begin{example}
Applying the corollary when $\mathbf{B}$ is classical, i.e., $B$ is abelian,
we take $\mathbf{B}$'s reversing operation to be the identity map, in which
case $\mathbf{B}^{\leftarrow}=\mathbf{B}^{\prime}$, and $\theta_{\mathbf{B}}%
$-sqdb simply says that $\mathbf{B}^{\prime}=\mathbf{B}$ (refer to Section
\ref{AfdFB&KMS}), which generalizes detailed balance of a Markov chain. Thus
we have the deviation $W_{\sigma}(\mathbf{A},\mathbf{A}^{\leftarrow})$ of
$\mathbf{A}$ from $\theta_{\mathbf{A}}$-sqdb bounded in terms of its distance
$W_{\sigma}(\mathbf{A},\mathbf{B})$ from a system $\mathbf{B}$ satisfying
usual classical detailed balance. Similarly for $W_{\sigma}(\mathbf{A}%
^{\leftarrow},\mathbf{A})$.
\end{example}

When both $\mathbf{A}$ and $\mathbf{B}$ are classical, the triviality of the
modular dynamics gives $W_{\sigma}=W$. Hence the inequalities for $W_{\sigma}$
can be stated as%
\[
W(\mathbf{A},\mathbf{A}^{\prime})\leq2W(\mathbf{A},\mathbf{B})\text{ \ \ and
\ \ }W(\mathbf{A}^{\prime},\mathbf{A})\leq2W(\mathbf{B},\mathbf{A}).
\]
As illustrated in Subsection \ref{OndAfdKlasVbUitdrGeval}, $W$ is not
symmetric in general, even for classical systems. To ensure the usual metric
symmetry, $W_{\sigma\sigma}$ still has to be used.

\textbf{Conclusion.}
The various inequalities above illustrate that considering a set of systems at
most some specified Wasserstein distance away from a given system $\mathbf{B}$
satisfying a detailed balance property, sensibly relates to the deviation (as
measured by Wasserstein distance) of systems in that set from detailed
balance. In addition, we can choose $\mathbf{B}$ to be simple or well
understood, say some appropriately chosen classical system. In this way we can
obtain what we expect to be sets of non-equilibrium systems $\mathbf{A}$
having structure close enough to detailed balance to make them amenable to
further analysis.

\subsection{Zero Wasserstein distance and common
structure\label{OndAfdW&GemStruk}}

To shed light on this issue regarding the structure of a system $\mathbf{A}$,
we turn to the implications of zero Wasserstein distance in relation to common
structure in two systems. Our basic result in this regard is stated as Theorem
\ref{gemeneStruktuur} below. This in fact boils down to a generalization of
the ``faithfulness'' of an (asymmetric) metric $\rho$, namely the property
that $\rho(x,y)=0$ implies $x=y$, which was not discussed in the previous
section. We leave the case of non-zero Wasserstein distance, which appears to
much more involved, for future work. Again modular properties and KMS duals
play a key role, hence our strongest results will be for $W_{\sigma}$ and
$W_{\sigma\sigma}$.

This is intimately related to the coordinates of a system, in particular the
algebra generated by them, hence we introduce the following terminology and
notation to be used in the remainder of this section.

\begin{definition}
The \emph{coordinate algebra} $M$ of a system $\mathbf{A}$ is the von Neumann
subalgebra of $A$ generated by $\{k_{1},...,k_{d}\}$.
\end{definition}

This notation $M$ will be used as a convention below. Similarly, $N$ will
denote the coordinate algebra of $\mathbf{B}$.

Given any $\mathbf{A}$, the condition $\{k_{1}^{\ast},...,k_{d}^{\ast
}\}=\{k_{1},...,k_{d}\}$ will now become relevant. (Note that this does not
mean that $k_{1}^{\ast}=k_{1}$, etc.) In \cite{D1, D2} this condition was
needed as part of the proof of faithfulness of an asymmetric metric. This will
now be obtained in a generalized form for $W_{\sigma}$ and $W_{\sigma\sigma}$.
For later convenience, we give this condition a name:

\begin{definition}
A system $\mathbf{A}$ is called \emph{hermitian} if $\{k_{1}^{\ast}%
,...,k_{d}^{\ast}\}=\{k_{1},...,k_{d}\}$.
\end{definition}

We need the following lemma extending a basic technical point from \cite[Lemma
6.3]{D1}.

\begin{lemma}
\label{nulKoste}Consider systems $\mathbf{A}$ and $\mathbf{B}$, with
$\mathbf{A}$ hermitian. Assume that there is an $\omega\in T(\mathbf{A}%
,\mathbf{B})$ such that $I_{\mathbf{A},\mathbf{B}}(\omega)=0$. Then the
restriction of $E_{\omega}$ to $M$ is a normal unital $\ast$-homomorphism
\[
E_{\omega}|_{M}:M\rightarrow N
\]
and $E_{\omega}(k_{i})=l_{i}$ for $i=1,...,d$.
\end{lemma}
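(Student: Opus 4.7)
The plan is to decompose the vanishing cost via the inequality chain (\ref{Kadison}) already used in the proof of Theorem \ref{metries}, combined with the faithfulness of $\nu$, and to identify the resulting equalities as membership in the multiplicative domain of the u.c.p.\ map $E_{\omega}$. The hermitian hypothesis on $\mathbf{A}$ is precisely what supplies \emph{both} of the two equalities (one for $k_{i}^{\ast}k_{i}$ and one for $k_{i}k_{i}^{\ast}$) required for that membership.

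First I would rewrite the $i$-th summand of $I_{\mathbf{A},\mathbf{B}}(\omega)$, exactly as in (\ref{Kadison}) and using $\nu\circ E_{\omega}=\mu$, as
\begin{equation*}
\mu(k_{i}^{\ast}k_{i})+\nu(l_{i}^{\ast}l_{i})-\nu(E_{\omega}(k_{i})^{\ast}l_{i})-\nu(l_{i}^{\ast}E_{\omega}(k_{i}))=\nu\bigl(|l_{i}-E_{\omega}(k_{i})|^{2}\bigr)+\nu\bigl(E_{\omega}(k_{i}^{\ast}k_{i})-E_{\omega}(k_{i})^{\ast}E_{\omega}(k_{i})\bigr).
\end{equation*}
Both summands on the right are nonnegative, the second by Kadison's inequality for $E_{\omega}$, so $I_{\mathbf{A},\mathbf{B}}(\omega)=0$ forces each to vanish for every $i$. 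Faithfulness of $\nu$ on positive elements of $B$ then yields the two key identities
\begin{equation*}
l_{i}=E_{\omega}(k_{i})\qquad\text{and}\qquad E_{\omega}(k_{i}^{\ast}k_{i})=E_{\omega}(k_{i})^{\ast}E_{\omega}(k_{i}).
\end{equation*}

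Next I would invoke the hermitian assumption: for each $i$ there is a $j$ with $k_{i}^{\ast}=k_{j}$, so applying the second identity to $k_{j}$ and using that $E_{\omega}$ is $\ast$-preserving (hence $E_{\omega}(k_{j})=E_{\omega}(k_{i})^{\ast}$) also delivers $E_{\omega}(k_{i}k_{i}^{\ast})=E_{\omega}(k_{i})E_{\omega}(k_{i})^{\ast}$. By Choi's theorem on multiplicative domains, which applies because $E_{\omega}$ is u.c.p., each $k_{i}$ therefore lies in the multiplicative domain of $E_{\omega}$, a unital $\ast$-subalgebra of $A$ on which $E_{\omega}$ restricts to a $\ast$-homomorphism. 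Normality of $E_{\omega}$, inherited from the construction in Definition \ref{KanVsToest} in the same spirit as the normality remarks following Definition \ref{DefDuaal}, makes this multiplicative domain $\sigma$-weakly closed, so it contains the von Neumann algebra $M$ generated by $\{k_{1},\ldots,k_{d}\}$. Consequently $E_{\omega}|_{M}$ is a normal unital $\ast$-homomorphism.

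Finally, $E_{\omega}|_{M}$ sends each generator $k_{i}$ to $l_{i}\in N$, so as a $\ast$-homomorphism it sends every $\ast$-polynomial in the $k_{i}$'s into $N$; normality together with the $\sigma$-weak closedness of $N$ extends this to $E_{\omega}(M)\subseteq N$, giving the claimed map $M\to N$. The only places requiring any real care are the verification that Choi's multiplicative-domain theorem can be lifted to the von Neumann algebra level (which is where normality of $E_{\omega}$ enters) and the bookkeeping for the hermitian-induced second equality; both are standard given the framework built earlier in the paper.
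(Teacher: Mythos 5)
Your proof is correct and follows essentially the same route as the paper's: the Kadison-type decomposition of the cost, faithfulness of $\nu$, Choi's multiplicative-domain theorem combined with the hermitian hypothesis, and normality plus $\sigma$-weak continuity to pass from the $\ast$-algebra generated by the $k_{i}$'s to the von Neumann algebra $M$. The only cosmetic difference is that you extend via $\sigma$-weak closedness of the multiplicative domain, whereas the paper extends the homomorphism property from the $\sigma$-weakly dense unital $\ast$-algebra $M_{0}$ using separate $\sigma$-weak continuity of multiplication.
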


\begin{proof}
As in (\ref{Kadison}),
\[
\nu\left(  \left|  l_{i}-E_{\omega}(k_{i})\right|  ^{2}\right)  +\nu\left(
E_{\omega}(\left|  k_{i}\right|  ^{2})-\left|  E_{\omega}(k_{i})\right|
^{2}\right)  =0,
\]
due to $I_{\mathbf{A},\mathbf{B}}(\omega)=0$. Thus $E_{\omega}(k_{i})=l_{i}$
and $E_{\omega}(\left|  k_{i}\right|  ^{2})=\left|  E_{\omega}(k_{i})\right|
^{2}$, since $\nu$ is faithful and $E_{\omega}(\left|  k_{i}\right|  ^{2}%
)\geq\left|  E_{\omega}(k_{i})\right|  ^{2}$. Let $M_{0}$ be the unital $\ast
$-algebra generated by $\{1_{A},k_{1},...,k_{d}\}$. Applying \cite[Theorem
3.1]{Choi74} and using the fact that $\mathbf{A}$ is hermitian, we find that
$E_{\omega}|_{M_{0}}:M_{0}\rightarrow N$ is a unital $\ast$-homomorphism. As
$E_{\omega}$ is normal (i.e., $\sigma$-weakly continuous) and $M_{0}$ is
$\sigma$-weakly dense in $M$, the lemma follows, simply because the maps
$a_{1}\mapsto a_{1}a_{2}$ and $a_{2}\mapsto a_{1}a_{2}$ are $\sigma$-weakly continuous.
\end{proof}

Note that this lemma is applicable when $W(\mathbf{A},\mathbf{B})=0$, due to
the existence of optimal transport plan in Theorem \ref{metries}, which indeed
led to $\mathbf{A}=\mathbf{B}$ for the case of $A=B$ and $k=l$ in
\cite[Theorem 3.10]{D2}. However, to make any appreciable further progress in
our current more general context, we need the additional structure of modular
and KMS transport plans in the definitions of $W_{\sigma}$ and $W_{\sigma
\sigma}$.

To formulate our results below more succinctly, we are going to use the
notation
\[
\alpha|_{M}%
\]
to refer to the restrictions $(\alpha|_{M})_{\upsilon,z}:=\alpha_{\upsilon
,z}|_{M}$ and
\[
E_{\omega}\circ\alpha
\]
to refer to $(E_{\omega}\circ\alpha)_{\upsilon,z}:=E_{\omega}\circ
\alpha_{\upsilon,z}$, etc., for all $z\in Z_{\upsilon}$ and $\upsilon
\in\Upsilon$, in line with Definition \ref{stelsel}. I.e., we are going to
suppress $\upsilon$ and $z$ in our notation.

Also recall that a $\ast$-isomorphism from one von Neumann algebra to another
is a linear map that is one to one (injective), onto (surjective) and
preserves all algebraic structure.

\begin{theorem}
\label{gemeneStruktuur}Let $\mathbf{A}$ and $\mathbf{B}$ be hermitian systems
such that $W_{\sigma}(\mathbf{A},\mathbf{B})=0$. Then there is a $\ast
$-isomorphism $\iota_{\mathbf{A},\mathbf{B}}:M\rightarrow N$ between the
coordinate algebras of $\mathbf{A}$ and $\mathbf{B}$, uniquely determined by
$\iota_{\mathbf{A},\mathbf{B}}(k_{i})=l_{i}$ for $i=1,...,d$, such that
\begin{equation}
E_{\omega}\circ\alpha|_{M}=\beta\circ\iota_{\mathbf{A},\mathbf{B}}
\label{betaBeperk}%
\end{equation}
for any optimal transport plan $\omega\in T_{\sigma}(\mathbf{A},\mathbf{B})$.
\end{theorem}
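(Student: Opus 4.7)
The plan is to first invoke Theorem \ref{metries} to obtain an optimal transport plan: since $W_\sigma(\mathbf{A},\mathbf{B})=0$ and optimal transport plans always exist, we can select $\omega\in T_\sigma(\mathbf{A},\mathbf{B})$ with $I_{\mathbf{A},\mathbf{B}}(\omega)=0$. Because $\mathbf{A}$ is hermitian, Lemma \ref{nulKoste} applies and delivers the first essential ingredient: the restriction $E_\omega|_M:M\to N$ is a normal unital $*$-homomorphism, and $E_\omega(k_i)=l_i$ for $i=1,\dots,d$. This already produces a candidate for $\iota_{\mathbf{A},\mathbf{B}}$; what remains is to show it is bijective, independent of the choice of optimal $\omega$, and intertwines the dynamics as required.

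Next I would verify that $E_\omega|_M$ is a $*$-isomorphism. For injectivity, given $a\in M$ with $E_\omega(a)=0$, the $*$-homomorphism property combined with $\mu=\nu\circ E_\omega$ yields
\[
\mu(a^*a)=\nu\bigl(E_\omega(a^*a)\bigr)=\nu\bigl(E_\omega(a)^*E_\omega(a)\bigr)=0,
\]
and faithfulness of $\mu$ on $A$ (hence on $M$) forces $a=0$. For surjectivity, the image of $E_\omega|_M$ is a von Neumann subalgebra of $N$, by the standard fact that a normal $*$-homomorphism between von Neumann algebras has $\sigma$-weakly closed image (its kernel is a $\sigma$-weakly closed two-sided ideal, so the map descends to a normal $*$-isomorphism onto the image). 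This image contains $\{l_1,\dots,l_d\}$, which by definition generates $N$ as a von Neumann algebra, so the image is all of $N$. Denote the resulting $*$-isomorphism by $\iota_{\mathbf{A},\mathbf{B}}$.

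For uniqueness and independence of the optimal plan: if $\omega'\in T_\sigma(\mathbf{A},\mathbf{B})$ is another optimal transport plan, then $E_{\omega'}|_M$ is likewise a normal unital $*$-homomorphism with $E_{\omega'}(k_i)=l_i$. Using the hermitian hypothesis, the two maps also agree on $\{k_1^*,\dots,k_d^*\}$, hence on the unital $*$-algebra $M_0$ generated by the $k_i$'s, and then on all of $M$ by $\sigma$-weak continuity and the $\sigma$-weak density of $M_0$ in $M$. Thus $\iota_{\mathbf{A},\mathbf{B}}$ is unambiguously determined by the condition $\iota_{\mathbf{A},\mathbf{B}}(k_i)=l_i$. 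Finally, since $\omega\in T(\mathbf{A},\mathbf{B})$, relation (\ref{BalE}) gives $E_\omega\circ\alpha_{\upsilon,z}=\beta_{\upsilon,z}\circ E_\omega$ for all $\upsilon\in\Upsilon$, $z\in Z_\upsilon$; restricting the left-hand side to $M$ and replacing $E_\omega|_M$ by $\iota_{\mathbf{A},\mathbf{B}}$ on the right yields $E_\omega\circ\alpha|_M=\beta\circ\iota_{\mathbf{A},\mathbf{B}}$.

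The main obstacle is the surjectivity step: concluding that $E_\omega|_M$ maps \emph{onto} $N$ (not merely into it) rests on the classical but nontrivial closure property of images of normal $*$-homomorphisms of von Neumann algebras. The hermitian hypothesis plays a supporting but crucial role throughout — it legitimises the application of Lemma \ref{nulKoste} (and through it \cite[Theorem 3.1]{Choi74}), and it is what allows two normal $*$-homomorphisms agreeing on $\{k_i\}$ to agree on the full $*$-algebra $M_0$ and hence, by normality, on $M$.
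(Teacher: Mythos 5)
Your proposal is correct, but the way you establish bijectivity of $E_{\omega}|_{M}$ is genuinely different from the paper's argument. The paper stays entirely inside its own machinery: since $\omega\in T_{\sigma}(\mathbf{A},\mathbf{B})$, the computation from Theorem \ref{Duaal&Sim}'s proof gives $I_{\mathbf{B}^{\sigma},\mathbf{A}^{\sigma}}(\omega^{\sigma})=I_{\mathbf{A},\mathbf{B}}(\omega)=0$, so Lemma \ref{nulKoste} applied to $\omega^{\sigma}$ (this is where hermiticity of $\mathbf{B}$ enters) produces a second normal unital $\ast$-homomorphism $E_{\omega}^{\sigma}|_{N}:N\rightarrow M$ with $l_{i}\mapsto k_{i}$, which is then checked to be the inverse of $E_{\omega}|_{M}$ on the generated $\ast$-algebras and hence, by normality, globally. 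You instead prove injectivity directly from $\nu\circ E_{\omega}=\mu$ together with faithfulness of $\mu$, and surjectivity from the standard fact that a normal $\ast$-homomorphism of von Neumann algebras has $\sigma$-weakly closed range (so the image is a von Neumann subalgebra of $N$ containing the generators $l_{1},\dots,l_{d}$). Both routes are sound; what each buys is different. Your argument is formally stronger: it never uses the modular condition on $\omega$ nor the hermiticity of $\mathbf{B}$, so it would already apply to a zero-cost plan in $T(\mathbf{A},\mathbf{B})$, i.e.\ to $W(\mathbf{A},\mathbf{B})=0$ --- somewhat at odds with the paper's surrounding remark that the modular structure is needed here, though it relies on a nontrivial imported theorem (Krein--Smulian/closed-range) where the paper's proof is self-contained modulo Lemma \ref{nulKoste}. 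The paper's route, on the other hand, yields the explicit formula $\iota_{\mathbf{A},\mathbf{B}}^{-1}=E_{\omega}^{\sigma}|_{N}$, which is not a by-product of your abstract bijectivity argument but is used immediately afterwards in (\ref{betaItvAlfa}) and in Corollary \ref{simStruk}; if you take your route, that identity would still need a separate (short) verification. Your treatment of uniqueness and of the intertwining relation via (\ref{BalE}) matches the paper; one cosmetic remark is that two unital $\ast$-homomorphisms agreeing on the $k_{i}$ automatically agree on the $k_{i}^{\ast}$, so hermiticity is not needed at that particular step, only to invoke Lemma \ref{nulKoste} in the first place.
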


\begin{proof}
By Theorem \ref{metries} and Lemma \ref{nulKoste} we have a normal unital
$\ast$-homomorphism $\iota_{\mathbf{A},\mathbf{B}}:=E_{\omega}|_{M}%
:M\rightarrow N$, necessarily uniquely determined by $\iota_{\mathbf{A}%
,\mathbf{B}}(k_{i})=l_{i}$ for $i=1,...,d$. As in Theorem \ref{Duaal&Sim}'s
proof we have $I_{\mathbf{B}^{\sigma}\mathbf{,A}^{\sigma}}(\omega^{\sigma}%
)=0$. We therefore also have a normal unital $\ast$-homomorphism
$\iota_{\mathbf{B}^{\sigma}\mathbf{,A}^{\sigma}}:=E_{\omega}^{\sigma}%
|_{M}:N\rightarrow M$, uniquely determined by $\iota_{\mathbf{B}^{\sigma
}\mathbf{,A}^{\sigma}}(l_{i})=k_{i}$ for $i=1,...,d$. From this it is easily
seen that $\iota_{\mathbf{A},\mathbf{B}}$ and $\iota_{\mathbf{B}^{\sigma
}\mathbf{,A}^{\sigma}}$ are each other's inverses: $\iota_{\mathbf{B}^{\sigma
}\mathbf{,A}^{\sigma}}\circ\iota_{\mathbf{A},\mathbf{B}}(k_{i})=k_{i}$ meaning
that $\iota_{\mathbf{B}^{\sigma}\mathbf{,A}^{\sigma}}\circ\iota_{\mathbf{A}%
,\mathbf{B}}|_{M_{0}}=\operatorname*{id}_{M_{0}}$ with $M_{0}$ as in the
lemma's proof, hence $\iota_{\mathbf{B}^{\sigma}\mathbf{,A}^{\sigma}}%
\circ\iota_{\mathbf{A},\mathbf{B}}=\operatorname*{id}_{M}$ by $\sigma$-weak
continuity; similarly $\iota_{\mathbf{A},\mathbf{B}}\circ\iota_{\mathbf{B}%
^{\sigma}\mathbf{,A}^{\sigma}}=\operatorname*{id}_{N}$. Hence $\iota
_{\mathbf{A},\mathbf{B}}$ is indeed a $\ast$-isomorphism from $M$ onto $N$.

Now $E_{\omega}\circ\alpha|_{M}=\beta\circ\iota_{\mathbf{A},\mathbf{B}}$
follows directly from $E_{\omega}\circ\alpha=\beta\circ E_{\omega}$, which is
part of $T_{\sigma}(\mathbf{A},\mathbf{B})$'s definition; see (\ref{BalE}) and
Definition \ref{T(A,B)}.
\end{proof}

In particular (\ref{betaBeperk}) tells us (by applying $\iota_{\mathbf{B}%
^{\sigma}\mathbf{,A}^{\sigma}}$ from the right) that
\begin{equation}
\beta|_{N}=E_{\omega}\circ\alpha|_{M}\circ\iota_{\mathbf{B}^{\sigma
}\mathbf{,A}^{\sigma}}, \label{betaItvAlfa}%
\end{equation}
showing that $\beta|_{N}$ is determined by $\alpha|_{M}$, while $\alpha|_{M}$
is at least constrained by $\beta|_{N}$. This clearly illustrates our main
point about common structure in the dynamics of the two systems when
$W_{\sigma}(\mathbf{A},\mathbf{B})=0$.

Building on this result, and in preparation for the corollaries below, we
state the following:

\begin{definition}
Two systems $\mathbf{A}$ and $\mathbf{B}$ are called \emph{isomorphic} if
there is a $\ast$-isomorphism $\iota:A\rightarrow B$ such that $\iota
\circ\alpha=\beta\circ\iota$, $\nu\circ\iota=\mu$ and $\iota(k_{i})=l_{i}$ for
$i=1,...,d$.
\end{definition}

For all intents and purposes isomorphic systems are the same, but possibly
represented differently. Below we show how isomorphic systems can be
identified inside $\mathbf{A}$ and $\mathbf{B}$ in Theorem
\ref{gemeneStruktuur} when additional assumptions are made. In particular, in
Theorem \ref{gemeneStruktuur} we did not assume that $\alpha(M)\subset M$,
i.e., that $\alpha(M)$ is contained in $M$, or that $\beta(N)\subset N$, hence
we could not necessarily restrict the systems $\mathbf{A}$ or $\mathbf{B}$ to
systems on $M$ and $N$ respectively.

\begin{corollary}
\label{gemeneSubstelsel}Assuming that $\alpha(M)\subset M$, in addition to
Theorem \ref{gemeneStruktuur}'s assumptions, it follows that $\beta(N)\subset
N$ and
\[
\iota_{\mathbf{A},\mathbf{B}}\circ\alpha|_{M}=\beta\circ\iota_{\mathbf{A}%
,\mathbf{B}}.
\]
Consequently we can restrict all the structures in $\mathbf{A}$ and
$\mathbf{B}$ to $M$ and $N$ respectively to obtain isomorphic systems
$\mathbf{M}$ and $\mathbf{N}$ respectively.
\end{corollary}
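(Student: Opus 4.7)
The plan is to ride the intertwining identity $E_\omega \circ \alpha|_M = \beta \circ \iota_{\mathbf{A},\mathbf{B}}$ already delivered by Theorem \ref{gemeneStruktuur} and to use the added hypothesis $\alpha(M)\subset M$ together with the fact that $\iota_{\mathbf{A},\mathbf{B}}$ is a $\ast$-isomorphism of $M$ \emph{onto} $N$, in order to translate $\alpha$-invariance of $M$ into $\beta$-invariance of $N$. After that, packaging the restricted objects as systems and verifying the isomorphism conditions is essentially bookkeeping.

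First I would fix an optimal $\omega \in T_\sigma(\mathbf{A},\mathbf{B})$, whose existence is guaranteed by Theorem \ref{metries}; Theorem \ref{gemeneStruktuur} then supplies $\iota_{\mathbf{A},\mathbf{B}} = E_\omega|_M$ and the relation (\ref{betaBeperk}). For any $a \in M$ and any index $(\upsilon,z)$, the assumption $\alpha(M)\subset M$ places $\alpha_{\upsilon,z}(a)$ back in $M$, so the right-hand side of (\ref{betaBeperk}) applied to $a$ becomes $\iota_{\mathbf{A},\mathbf{B}}(\alpha_{\upsilon,z}(a))$, which lies in $N$. Thus $\beta_{\upsilon,z}(\iota_{\mathbf{A},\mathbf{B}}(a))\in N$ for every $a\in M$; since $\iota_{\mathbf{A},\mathbf{B}}(M)=N$ by the surjectivity established in Theorem \ref{gemeneStruktuur}, I get $\beta_{\upsilon,z}(N)\subset N$ for all $\upsilon,z$, that is $\beta(N)\subset N$, and the displayed identity $\iota_{\mathbf{A},\mathbf{B}} \circ \alpha|_M = \beta \circ \iota_{\mathbf{A},\mathbf{B}}$ reads off at the same time.

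Next I would assemble the restricted data into systems in the sense of Definition \ref{stelsel}. The quadruples $\mathbf{M} := (M, \alpha|_M, \mu|_M, k)$ and $\mathbf{N} := (N, \beta|_N, \nu|_N, l)$ satisfy all the requirements: the restrictions $\mu|_M$ and $\nu|_N$ remain faithful and normal on the respective von Neumann subalgebras; each $\alpha_{\upsilon,z}|_M : M \to M$ and $\beta_{\upsilon,z}|_N : N \to N$ is a well-defined u.p.\ map thanks to $\alpha(M)\subset M$ and the newly proved $\beta(N)\subset N$; invariance $\mu|_M \circ \alpha|_M = \mu|_M$ is immediate from $\mu\circ\alpha=\mu$, and similarly for $\mathbf{N}$; the coordinates $k$ and $l$ lie in $M$ and $N$ by definition of the coordinate algebras.

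Finally I would check that $\iota_{\mathbf{A},\mathbf{B}}:M\to N$ realizes the isomorphism between $\mathbf{M}$ and $\mathbf{N}$ in the sense of the definition preceding this corollary: it is a $\ast$-isomorphism by Theorem \ref{gemeneStruktuur}; it intertwines the restricted dynamics, $\iota_{\mathbf{A},\mathbf{B}} \circ \alpha|_M = \beta|_N \circ \iota_{\mathbf{A},\mathbf{B}}$, by the previous paragraph; it transports the invariant state, $\nu|_N \circ \iota_{\mathbf{A},\mathbf{B}} = \mu|_M$, because $\nu\circ E_\omega = \mu$ is part of the coupling built into Definition \ref{KanVsToest}; and it sends coordinates to coordinates, $\iota_{\mathbf{A},\mathbf{B}}(k_i)=l_i$, again from Theorem \ref{gemeneStruktuur}. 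There is no real obstacle here; the only point requiring a bit of care is keeping the index $(\upsilon,z)$ of the generalized dynamics under control when one restricts and conjugates by $\iota_{\mathbf{A},\mathbf{B}}$, since every claim must hold uniformly over all $\upsilon\in\Upsilon$ and $z\in Z_\upsilon$.
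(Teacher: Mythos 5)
Your proof is correct and follows essentially the same route as the paper, which simply notes that the corollary "follows directly" from Theorem \ref{gemeneStruktuur} together with $\iota_{\mathbf{A},\mathbf{B}}=E_{\omega}|_{M}$ and $\nu\circ E_{\omega}=\mu$; your write-up just fills in the details (using surjectivity of $\iota_{\mathbf{A},\mathbf{B}}$ onto $N$ to pass from $\beta(\iota_{\mathbf{A},\mathbf{B}}(M))\subset N$ to $\beta(N)\subset N$, and the bookkeeping for the restricted systems) that the paper leaves implicit.
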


\begin{proof}
This follows directly from Theorem \ref{gemeneStruktuur} and $\iota
_{\mathbf{A},\mathbf{B}}:=E_{\omega}|_{M}$ in its proof. Keep in mind that
$\nu\circ\iota_{\mathbf{A},\mathbf{B}}=\nu\circ E_{\omega}|_{M}=\mu|_{M}$; see
Definition \ref{KanVsToest}.
\end{proof}

This is a very clear cut case of common structure in $\mathbf{A}$ and
$\mathbf{B}$. It suggests that when we aim to analyze a system $\mathbf{B}$ by
comparison to simple or well understood systems, one strategy would be to
choose each of the latter as a system $\mathbf{A}$ with $A$ generated by
$\mathbf{A}$'s coordinates $k_{1},...,k_{d}$. For $W_{\sigma}(\mathbf{A}%
,\mathbf{B})=0$, this corollary then implies that
\[
\beta|_{N}=\iota_{\mathbf{A},\mathbf{B}}\circ\alpha\circ\iota_{\mathbf{A}%
,\mathbf{B}}^{-1},
\]
giving a precise expression of how $\mathbf{A}$ is contained in $\mathbf{B}$,
and how $\mathbf{A}$'s properties and behaviour are consequently reflected in
that of $\mathbf{B}$. Theorem \ref{gemeneStruktuur} can be viewed as weaker
version of this situation.

To show explicitly how Theorem \ref{gemeneStruktuur} relates to the
faithfulness of an asymmetric metric, we state the following special case of
this corollary:

\begin{corollary}
\label{MetrieseGetrouheid}Consider two hermitian systems $\mathbf{A}$ and
$\mathbf{B}$, where $A$ is generated by $k_{1},...,k_{d}$ and $B$ by
$l_{1},...,l_{d}$. If $W_{\sigma}(\mathbf{A},\mathbf{B})=0$, it then follows
that $\mathbf{A}$ and $\mathbf{B}$ are isomorphic.
\end{corollary}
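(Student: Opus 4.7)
The plan is to observe that this corollary is essentially a direct specialization of Corollary \ref{gemeneSubstelsel} to the case where the coordinates generate the full algebras, so that $M = A$ and $N = B$. There is no real new content to establish; the main task is simply to check that each clause in the definition of isomorphic systems follows from what has already been proved.

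First I would invoke Theorem \ref{gemeneStruktuur}. Since $\mathbf{A}$ and $\mathbf{B}$ are hermitian and $W_{\sigma}(\mathbf{A},\mathbf{B}) = 0$, we obtain a $\ast$-isomorphism $\iota := \iota_{\mathbf{A},\mathbf{B}} \colon M \to N$ with $\iota(k_i) = l_i$ for $i=1,\ldots,d$. By hypothesis, $M = A$ and $N = B$, so $\iota$ is in fact a $\ast$-isomorphism from $A$ onto $B$, satisfying the coordinate-matching condition of the definition of isomorphic systems.

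Next, since $M = A$, the inclusion $\alpha(M) \subset M$ is automatic. Corollary \ref{gemeneSubstelsel} therefore applies and yields $\iota \circ \alpha = \beta \circ \iota$, which is the intertwining of dynamics required in the definition. It remains only to verify the state-preservation $\nu \circ \iota = \mu$. This follows because $\iota = E_{\omega}|_{M} = E_{\omega}$ (where $\omega$ is an optimal transport plan, existing by Theorem \ref{metries}), and any transport plan $\omega \in T(\mu,\nu)$ satisfies $\nu \circ E_{\omega} = \mu$ by Definition \ref{KanVsToest} together with the coupling conditions (\ref{kop}); indeed, setting $b' = 1$ in (\ref{omegaItv_delta&Eomega}) gives $\mu(a) = \omega(a \otimes 1) = \delta_\nu(E_\omega(a) \otimes 1) = \nu(E_\omega(a))$.

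There is no substantive obstacle here; everything is a straightforward bookkeeping step once the generating hypothesis on coordinates is used to force $M = A$ and $N = B$. If anything, the only minor point worth pausing on is making sure the state-preservation clause $\nu \circ \iota = \mu$ is not overlooked, since it is not part of the conclusion of Theorem \ref{gemeneStruktuur} or Corollary \ref{gemeneSubstelsel} as stated, but it is immediate from the transport plan formalism. With this verified, all four conditions in the definition of isomorphic systems hold, and the corollary is established.
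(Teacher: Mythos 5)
Your proposal is correct and follows exactly the route the paper intends: the corollary is stated there precisely as the special case of Corollary \ref{gemeneSubstelsel} in which the generating hypothesis forces $M=A$ and $N=B$ (so $\alpha(M)\subset M$ is automatic), and the state-preservation $\nu\circ\iota_{\mathbf{A},\mathbf{B}}=\mu$ is noted in the proof of that corollary via $\nu\circ E_{\omega}=\mu$, just as you verify it.
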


Note that the relationship between $\alpha|_{M}$ and $\beta|_{N}$ in Theorem
\ref{gemeneStruktuur} and Corollary \ref{gemeneSubstelsel} is not symmetric,
as may be expected from the lack of symmetry of the asymmetric pseudometric
$W_{\sigma}$.

\begin{example}
In the example treated in Subsection \ref{OndAfdKlasVbUitdrGeval}, where
$N=B$, one can show that $W_{\sigma}(\mathbf{A},\mathbf{B})=0$ is not
sufficient to ensure that $\mathbf{B}$ is isomorphic to a system on $M$
obtained by restriction of $\mathbf{A}$. Indeed, $W_{\sigma}(\mathbf{A}%
,\mathbf{B})=0$ does not imply that $\alpha(M)\subset M$, despite
$\beta(N)\subset N$ being true.
This is closely related to (\ref{asimVbDeel1}) and (\ref{asimVbDeel2}); keep
in mind that $W=W_{\sigma}$ for classical systems, as the modular dynamics for
such systems are trivial.
\end{example}

To attain a symmetric relationship, we of course need to resort to the
(symmetric) pseudometric $W_{\sigma\sigma}$. One can indeed expect that the
requirement $W_{\sigma\sigma}(\mathbf{A},\mathbf{B})=0$ may place stronger
restrictions than $W_{\sigma}(\mathbf{A},\mathbf{B})=0$ on \thinspace$\alpha$,
given $\beta$, simply because by definition we always have $W_{\sigma
}(\mathbf{A},\mathbf{B})\leq W_{\sigma\sigma}(\mathbf{A},\mathbf{B})$.

\begin{corollary}
\label{simStruk}Let $\mathbf{A}$ and $\mathbf{B}$ be hermitian systems such
that $W_{\sigma\sigma}(\mathbf{A},\mathbf{B})=0$. Then
\[
E_{\omega}\circ\alpha|_{M}=\beta\circ\iota_{\mathbf{A},\mathbf{B}}\text{
\ \ and \ \ }E_{\omega}^{\sigma}\circ\beta|_{N}=\alpha\circ\iota
_{\mathbf{B},\mathbf{A}}%
\]
for any optimal transport plan $\omega\in T_{\sigma\sigma}(\mathbf{A}%
,\mathbf{B})$, in which case $\omega^{\sigma}\in T_{\sigma\sigma}%
(\mathbf{B},\mathbf{A})$ is also optimal.
\end{corollary}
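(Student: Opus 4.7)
The plan is to deduce the corollary by applying Theorem \ref{gemeneStruktuur} twice: once to the pair $(\mathbf{A},\mathbf{B})$ with the given optimal transport plan $\omega$, and once to the reversed pair $(\mathbf{B},\mathbf{A})$ with the candidate optimal plan $\omega^{\sigma}$. The bulk of the work is verifying that $\omega^{\sigma}$ actually belongs to $T_{\sigma\sigma}(\mathbf{B},\mathbf{A})$ and realizes zero cost there.

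First I would observe that $T_{\sigma\sigma}(\mathbf{A},\mathbf{B})\subset T_{\sigma}(\mathbf{A},\mathbf{B})$, so $W_{\sigma}(\mathbf{A},\mathbf{B})\leq W_{\sigma\sigma}(\mathbf{A},\mathbf{B})=0$ and our $\omega$ is automatically optimal for $W_{\sigma}(\mathbf{A},\mathbf{B})$. Theorem \ref{gemeneStruktuur} then delivers the $\ast$-isomorphism $\iota_{\mathbf{A},\mathbf{B}}\colon M\to N$ and the first identity $E_{\omega}\circ\alpha|_{M}=\beta\circ\iota_{\mathbf{A},\mathbf{B}}$ with no further effort.

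The main step is to show $\omega^{\sigma}\in T_{\sigma\sigma}(\mathbf{B},\mathbf{A})$. I would invoke (\ref{duaalEkw}) twice: applied to $\mathbf{A}\omega\mathbf{B}$ it yields $\mathbf{B}^{\sigma}\omega^{\sigma}\mathbf{A}^{\sigma}$, while applied to $\mathbf{A}^{\sigma}\omega\mathbf{B}^{\sigma}$ (which holds by definition of $T_{\sigma\sigma}$) it yields $\mathbf{B}\omega^{\sigma}\mathbf{A}$. For the modular condition, I would apply (\ref{duaalEkw}) to the modular systems $(A,\sigma^{\mu},\mu)\omega(B,\sigma^{\nu},\nu)$ to obtain $(B,\sigma^{\nu},\nu)^{\sigma}\omega^{\sigma}(A,\sigma^{\mu},\mu)^{\sigma}$, then use $(\sigma_{t}^{\mu})^{\sigma}=\sigma_{-t}^{\mu}$ and reparametrize $t\mapsto -t$ to convert this into $(B,\sigma^{\nu},\nu)\omega^{\sigma}(A,\sigma^{\mu},\mu)$, giving the needed modular transport plan property. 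For the cost, I would compute $I_{\mathbf{B},\mathbf{A}}(\omega^{\sigma})$ from (\ref{algKoste}) using $E_{\omega^{\sigma}}=E_{\omega}^{\sigma}$ together with the intertwining identity $\mu(aE_{\omega}^{\sigma}(b))=\nu(E_{\omega}(a)b)$ recalled in the proof of Theorem \ref{Duaal&Sim}; after taking complex conjugates and using that $E_{\omega}$ preserves $\ast$, the four modified terms reduce to exactly the corresponding terms of $I_{\mathbf{A},\mathbf{B}}(\omega)=0$, so $\omega^{\sigma}$ has zero cost and is optimal.

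Finally, applying Theorem \ref{gemeneStruktuur} to the hermitian systems $\mathbf{B}$ and $\mathbf{A}$ with this optimal KMS transport plan $\omega^{\sigma}$ furnishes a $\ast$-isomorphism $\iota_{\mathbf{B},\mathbf{A}}\colon N\to M$ (with $\iota_{\mathbf{B},\mathbf{A}}(l_{i})=k_{i}$, so necessarily $\iota_{\mathbf{B},\mathbf{A}}=\iota_{\mathbf{A},\mathbf{B}}^{-1}$) and the second identity $E_{\omega}^{\sigma}\circ\beta|_{N}=\alpha\circ\iota_{\mathbf{B},\mathbf{A}}$. The main obstacle I anticipate is the careful bookkeeping in the second paragraph: each of the three defining conditions of $T_{\sigma\sigma}(\mathbf{B},\mathbf{A})$ transforms under $\omega\mapsto\omega^{\sigma}$ via a slightly different instance of (\ref{duaalEkw}), and the modular case in particular requires the time-reversal trick to unwind the appearance of $\sigma_{-t}$ produced by the KMS-dual.
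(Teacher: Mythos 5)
Your proposal is correct and follows essentially the same route as the paper's proof: reduce to $W_{\sigma}(\mathbf{A},\mathbf{B})=W_{\sigma}(\mathbf{B},\mathbf{A})=0$, verify that $\omega^{\sigma}\in T_{\sigma\sigma}(\mathbf{B},\mathbf{A})$ with $I_{\mathbf{B},\mathbf{A}}(\omega^{\sigma})=0$ via the computation from Theorem \ref{Duaal&Sim}'s proof, and then apply Theorem \ref{gemeneStruktuur} in both directions. The only difference is that you spell out the membership check $\omega^{\sigma}\in T_{\sigma\sigma}(\mathbf{B},\mathbf{A})$ (the three instances of (\ref{duaalEkw}) and the $t\mapsto-t$ reparametrization for the modular condition), which the paper compresses into a one-line appeal to the definition of $T_{\sigma\sigma}$.
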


\begin{proof}
Here $W_{\sigma}(\mathbf{A},\mathbf{B})=0$ and $W_{\sigma}(\mathbf{B}%
,\mathbf{A})=0$. The former follows from the definitions of $W_{\sigma}$ and
$W_{\sigma\sigma}$, which ensure that $W_{\sigma}(\mathbf{A},\mathbf{B})\leq
W_{\sigma\sigma}(\mathbf{A},\mathbf{B})$. The latter follows from the symmetry
of $W_{\sigma\sigma}$ given by Theorem \ref{metries}. By $T_{\sigma\sigma
}(\mathbf{A},\mathbf{B})$'s definition, we have $\omega^{\sigma}\in
T_{\sigma\sigma}(\mathbf{B},\mathbf{A})$ for any $\omega\in T_{\sigma\sigma
}(\mathbf{A},\mathbf{B})$. Hence for an optimal $\omega\in T_{\sigma\sigma
}(\mathbf{A},\mathbf{B})$, the cost $I_{\mathbf{B,A}}(\omega^{\sigma})$ is
defined and as in Theorem \ref{Duaal&Sim}'s proof we have $I_{\mathbf{B,A}%
}(\omega^{\sigma})=0$, so $\omega^{\sigma}$ is indeed optimal. Now we can
simply apply Theorem \ref{gemeneStruktuur} to both directions.
\end{proof}

We then have a corresponding symmetric version of Corollary
\ref{gemeneSubstelsel} as well, where we could assume either $\alpha(M)\subset
M$ or $\beta(N)\subset N$.

More generally, for any optimal transport plan, including for the case of $W$,
the condition $E_{\omega}\circ\alpha=\beta\circ E_{\omega}$ implies a relation
between $\mathbf{A}$ and $\mathbf{B}$ which can be viewed as a weaker
condition than an isomorphism between systems. This is the case even when the
systems are not hermitian or the Wasserstein distance being used is not zero.
Theorem \ref{gemeneStruktuur} and its corollaries simply state strong forms of
such a relation.

The general case can possibly be fruitfully viewed from the perspective of
normal u.c.p. maps as morphisms, or, in keeping with the bimodule approach,
from that of Connes' correspondences as morphisms. This will not be pursued
here, but see \cite[Appendix V.B]{Con94} and \cite[Section 5]{DS2} for
background on this matter, as well as \cite{BCM} in relation to dynamical
systems with automorphic (i.e., unitary) dynamics.

What is needed in addition to our development above, is a fitting notion of
the ``size'' of $E_{\omega}(A)$ in $B$, or of $E_{\psi}(B)$ in $A$, in
particular for optimal transport plans (for any of our Wasserstein distances)
$\omega$ from $\mathbf{A}$ to $\mathbf{B}$ or $\psi$ from $\mathbf{B}$ to
$\mathbf{A}$. This is relevant, simply because
\[
\beta(E_{\omega}(a))=E_{\omega}(\alpha(a))
\]
for all $a\in A$, hence $\alpha$ and the transport plan $\omega$ determines
the behaviour of $\beta$ on $E_{\omega}(A)$. The role of different optimal
transport plans, when not unique, may also be of interest.

\section{Outlook\label{AfdEinde}}

This paper developed the conceptual and technical foundations of a novel
framework for expressing and studying deviation from detailed balance via
optimal transport, in both the quantum and classical cases. In the process it
tied together the ideas built up in the papers \cite{DS2, D1, D2}, leading to
an optimal transport perspective on \cite{DS2}. Corollary \ref{fbAfskat}
together with its discussion is a fairly representative illustration of our
aims. More broadly the goal is to study (quantum) dynamical systems via the
optimal transport ideas and tools developed in this paper.

Much remains to be explored. In particular, to systematically apply the theory
to analyze systems, including in the context of non-equilibrium statistical
mechanics, as mentioned in the introduction. We expect that it should be of
broader use, though, for example in noncommutative as well as classical
ergodic theory to quantify and study how far a systems is from satisfying a
particular ergodic property.

The theory itself could also be developed further, some lines of investigation
having been discussed at the end of Section \ref{AfdWEiensk}.\textbf{ }In
addition, it would be worth investigating the case where the coordinates $k$
of a system $\mathbf{A}$ are not necessarily bounded. In this case they will
not be elements of $A$, but a natural assumption would be that they are
affiliated to $A$.

\bigskip
\textbf{Acknowledgments.} The support of the DSI-NRF Centre of Excellence 
in Mathematical and Statistical Sciences (CoE-MaSS) towards this research 
is hereby acknowledged. Opinions expressed and conclusions arrived at, are 
those of the authors and are not necessarily to be attributed to the CoE.

\end{document}